\newtheorem{theorem}{Theorem}[section]
\newtheorem{lemma}{Lemma}[section]
\newtheorem{claim}{Claim}[section]
\newtheorem{definition}{Definition}
\newclass{\CLS}{CLS}
\newlang{\MinQuadKKT}{MinQuadKKT}
\newlang{\MinmaxIndKKT}{MinmaxIndKKT}
\newcommand{\copyGadget}{\textsc{copy}}
\renewcommand{\epsilon}{\varepsilon}
\newcommand{\eps}{\varepsilon}
\begin{document}

\begin{center}
{\huge The Complexity of Two-Team Polymatrix Games\\[1mm]
with Independent Adversaries}\\[1.3cm] \large
	
\setlength\tabcolsep{1.2em}
\begin{tabular}{ccc}
Alexandros Hollender &
Gilbert Maystre &
Sai Ganesh Nagarajan\\
\small\slshape University of Oxford &
\small\slshape Oracle &
\small\slshape University of Southern Denmark
\end{tabular}

\vspace{6mm}

\vspace{4mm}
\end{center}

\begin{quote}
\noindent\small
{\bf Abstract.}~
Adversarial multiplayer games are an important object of study in multiagent learning. In particular, polymatrix zero-sum games are a multiplayer setting where Nash equilibria are known to be efficiently computable. Towards understanding the limits of tractability in polymatrix games, we study the computation of Nash equilibria in such games where each pair of players plays either a zero-sum or a coordination game. We are particularly interested in the setting where players can be grouped into a small number of teams of identical interest. While the three-team version of the problem is known to be $\PPAD$-complete, the complexity for two teams has remained open. Our main contribution is to prove that the two-team version remains hard, namely it is $\CLS$-hard. Furthermore, we show that this lower bound is tight (i.e., $\CLS$-membership) for the setting where one of the teams consists of multiple independent adversaries. By leveraging this result we also obtain a simple algorithm that finds an $\varepsilon$-Nash equilibrium and only has a $1/\varepsilon^2$ dependence in $\varepsilon$ in its running time. On the way to obtaining our main result, we prove hardness of finding any stationary point in the simplest type of non-convex-concave min-max constrained optimization problem, namely for a class of bilinear polynomial objective functions.
\end{quote}

\section{Introduction}

Game theory is a fundamental tool to encode strategic agent interactions and has found many applications in the modern AI landscape such as Generative Adversarial Networks \cite{goodfellow2020generative}, obtaining agents with expert level play in multiplayer games such as Starcraft and Quake III \cite{vinyals2019grandmaster,jaderberg2019human} and superhuman performance in poker \cite{brown2018superhuman, brown2019superhuman}. Computing a Nash equilibrium or a saddle point (when considering general minmax optimization problems) is a computational task of central importance in these applications. The celebrated minmax theorem of von Neumann and Morgenstern~\cite{von2007theory} established that two-player zero-sum games have efficient algorithms. However, it was shown that three-player zero-sum games \cite{daskalakis2009complexity} or two-player general games \cite{chen2009settling} are computationally intractable (formally $\PPAD$-hard) and the hardness is also known to hold for computing approximations.

Consequently, Daskalakis and Papadimitriou~\cite{daskalakis2009network} proposed a tractable class of multiplayer zero-sum games, where the players are placed on the nodes of a graph and play a matrix zero-sum game with each adjacent player. In this setting, the total utility that a player gets is the sum of the utilities from each game that they participate in. It is to be highlighted that removing the zero-sum game assumption between each player makes the problem hard. Indeed, computing Nash equilibria in general polymatrix games is known to be $\PPAD$-hard \cite{daskalakis2009complexity,chen2009settling,Rubinstein18-Nash-inapproximability,DeligkasFHM24-pure-circuit}. Cai and Daskalakis~\cite{Cai11} studied an intermediate setting where every edge of the polymatrix game can be either a zero-sum game or a coordination game. They showed $\PPAD$-hardness, even for the special case where the players can be grouped into \emph{three} teams, such that players within the same team play coordination games, and players in different teams play zero-sum games. However the case of two-team polymatrix games with zero-sum and coordination edges has remained open.

\paragraph{Adversarial Two-team Games With a Single Adversary:}
The general notion of adversarial \emph{team} games introduced by Von Stengel and Koller~\cite{von1997team} studies two teams that are playing a zero-sum game with each other, meaning each team member gets the \emph{same} payoff as the whole team and the sum of the team payoffs is zero. The primary motivation here is to study strategies for companies against adversaries. What makes the ``team'' aspect special is that the team members cannot coordinate their actions and must play independent mixed strategies. This captures imperfect coordination within companies. Indeed, if the teams instead had perfect coordination, then the setting would simply reduce to a two player zero-sum game. Von Stengel and Koller showed that there exists a team maxmin equilibrium, that can be extended to a Nash equilibrium for this setting when the team is playing against a \emph{single} adversary. Moreover, they showed that this is the \emph{best} Nash equilibrium for the team, thereby alleviating equilibrium selection issues. However, it was later shown that finding a team maxmin equilibrium is in fact $\FNP$-hard and the problem does not become easier if one allows approximate solutions \cite{hansen2008approximability,borgs2008myth}. Recently, Anagnostides et al.~\cite{Anagnostides23} studied the \emph{single} adversary setting, and were able to show that finding a Nash equilibrium in this setting is in fact $\CLS$-complete. The $\CLS$-hardness immediately follows from the work of Babichenko and Rubinstein~\cite{BabichenkoR21-congestion}, but importantly it requires a sufficiently general game structure and so does not apply to the polymatrix setting. The $\CLS$-membership on the other hand applies to any adversarial game with a single adversary. The main idea is to obtain a Nash equilibrium from an approximate stationary point of the max Moreau envelope of the function $x \mapsto \max_{y \in \mathcal{Y}} U(x,y)$ (where $x$ is the min variable, $y$ is the max variable and $U$ is the payoff function).

\paragraph{Connections to Complexity of Minmax Optimization:}
Two-team games are a special case of general nonconvex-nonconcave constrained minmax problems. Daskalakis et al.~\cite{daskalakis2021complexity} recently studied this general setting and showed that finding a stationary point is $\PPAD$-complete. Crucially, their $\PPAD$-hardness only applies when the constraint sets are \emph{coupled} between the min and the max player. However, games usually induce minmax problems with \emph{uncoupled} constraints. The complexity of the problem for uncoupled constraints remains open,\footnote{In recent work that appeared after the first publication of our work, it was in fact shown that finding a KKT point of a general nonconvex-nonconcave constrained minmax problem is $\PPAD$-complete, even for uncoupled constraints~\cite{BernasconiC26-uncoupled-minmax}.} although it is known to be $\CLS$-hard, since it is at least as hard as finding stationary points of standard non-convex minimization problems \cite{fearnley2022complexity}. As we discuss below, our results also have implications for uncoupled minmax optimization, where we obtain a $\CLS$-hardness result for a particularly simple family of objective functions. We note that Li et al.~\cite{li2021complexity} showed a \emph{query} lower bound of $\Omega(\frac{1}{\varepsilon^2})$ for smooth nonconvex-strongly-concave minmax optimization problems, but these results do not apply to the simple objective functions that we study (and which can only be studied from the perspective of computational complexity).

\paragraph{Connections to Multiagent Learning:} From a learning dynamics perspective, qualitative results focus on understanding the limit behavior of certain no-regret learning dynamics in polymatrix games. In particular, some works focus on obtaining asymptotic convergence guarantees for Q-learning and its variants \cite{leonardos2021exploration, hussain2023asymptotic}. In a similar vein, some other works studied the limit behaviors of replicator dynamics for polymatrix games, particularly with zero-sum and coordination edges \cite{nagarajan2018three, nagarajan2020chaos}. In these works, the focus was on trying to identify network topologies under which the learning dynamics exhibited simple (non-chaotic) behaviors. Surprisingly, there were works that could obtain non-asymptotic convergence guarantees using discrete time algorithms in multiagent reinforcement learning as well, with Leonardos et al.~\cite{leonardos2021global} establishing convergence to Nash policies in Markov potential games. In adversarial settings, Daskalakis et al.~\cite{daskalakis2020independent} studied independent policy gradient and proved convergence to Nash policies. Moreover, some recent works establish convergence to Nash policies in Markov zero-sum team games~\cite{kalogiannis2022efficiently} and Markov polymatrix zero-sum games~\cite{kalogiannis2024zero}. This further establishes the need to theoretically study the computational challenges in the simplest polymatrix settings which allow for \emph{both} zero-sum and coordination edges, in order to understand convergence guarantees in more complicated multiagent reinforcement learning scenarios. This leads us to the following main question:
\begin{center}
    \emph{What is the complexity of finding Nash equilibria in two-team zero-sum polymatrix games?}
\end{center}

\subsection{Our Contributions}

Our main contribution is the following computational hardness result.

\begin{theorem}[Informal]\label{theorem:CLS-hardness}
It is $\CLS$-hard to find an approximate Nash equilibrium of a two-team zero-sum polymatrix game, even when one of the teams does not have any internal edges.
\end{theorem}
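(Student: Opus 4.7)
The plan is to prove Theorem~\ref{theorem:CLS-hardness} via a two-stage reduction that routes through a simple bilinear min-max optimization problem. First, I would establish $\CLS$-hardness of finding an approximate KKT point of a bilinear polynomial min-max problem over a product of simplices, where the max-player is a direct product of independent simplex variables --- namely, the problem $\MinmaxIndKKT$. Second, I would reduce $\MinmaxIndKKT$ to finding an approximate Nash equilibrium of a two-team zero-sum polymatrix game in which team B has no internal edges, exploiting the natural correspondence between Nash equilibria of polymatrix games and first-order stationary points of the induced bilinear potential.

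For the first stage, the starting point is $\MinQuadKKT$, the problem of finding an approximate KKT point of a quadratic function on a simplex, which is known to be $\CLS$-complete by the work of Fearnley, Goldberg, Hollender and Savani. The challenge is that the $\MinmaxIndKKT$ objective is bilinear and concave in $y$, so one must simulate a nonconvex form $x^\top A x$ via bilinear interaction with auxiliary adversaries. The central device is a $\copyGadget$: a bilinear subgame which, at any approximate KKT point, forces an auxiliary simplex vector $y$ to (approximately) mimic a given simplex vector $x$. Using copy gadgets in parallel one can rewrite $x^\top A x$ as $x^\top A y$ plus stationarity-enforced pinning of $y \approx x$, so that the bilinear min-max objective reproduces the hard quadratic at its stationary points. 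The independence of the $y$ block is what gives the reduction room to plug in multiple copy gadgets without re-coupling the constraint set.

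For the second stage I would build the polymatrix game directly from a $\MinmaxIndKKT$ instance. Team B has one player for each independent simplex block of $y$ and no internal edges, exactly matching the structural restriction in the theorem. Team A is built from players carrying the $x$ variable, possibly replicated across several A-players linked by coordination-edge copy gadgets so as to realise the quadratic part of the objective as a sum of pairwise bilinear contributions within the team. Each bilinear cross term in $F(x,y)$ becomes a zero-sum edge between an A-player and a B-player, with the right sign chosen so that team B's collective adversarial incentive mirrors the $\max_y$ operator. The core lemma to prove is that an $\eps$-approximate Nash equilibrium of the constructed game yields a $\mathrm{poly}(\eps)$-approximate KKT point of the bilinear min-max problem, by inspecting each player's best-response condition coordinate-wise on the simplex.

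The main obstacle I expect is the $\copyGadget$ analysis: one must engineer a bilinear interaction whose approximate KKT points pin $y$ close to $x$ in, say, $\ell_1$ norm, with error bounds tight enough that after summing over the $O(n^2)$ entries of the quadratic form the propagated error remains within the $\CLS$ reduction budget. Related to this, ensuring that the replicated copies of $x$ across team A remain synchronized under only approximate Nash conditions --- so that the induced polymatrix payoff truly realises the intended quadratic form rather than a drifted version of it --- is delicate and will likely require a careful rescaling of the coordination-edge payoffs relative to the cross-team zero-sum payoffs, so that internal copy enforcement dominates any adversarial perturbation coming from team B.
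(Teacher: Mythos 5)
Your overall architecture matches the paper's: both reductions go $\MinQuadKKT \to \MinmaxIndKKT \to$ two-team polymatrix with independent adversaries, with a copy gadget handling the square terms in Stage~I and the bilinear cross terms becoming zero-sum edges (and the $x_ix_j$ terms coordination edges) in Stage~II. Your Stage~II sketch is essentially the paper's construction; the replication of $x$ across several team-A players that you anticipate is not needed there, since after Stage~I the objective is already multilinear and each $x_i$ maps to a single two-action player.

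The genuine gap is in the mechanism of your copy gadget. You propose a bilinear subgame whose approximate KKT points force the \emph{max}-variable $y$ to mimic $x$, so that $x^\top A x$ can be rewritten as $x^\top A y$. But within $\MinmaxIndKKT$ the objective is linear in $y$ (no $y_iy_j$ monomials, no $y_i^2$), so the KKT condition on $y_i$ only reads the sign of the coefficient of $y_i$: if that coefficient exceeds $\eps$ in magnitude, $y_i$ is pushed to a boundary point $\{0,1\}$; if it is below $\eps$, $y_i$ is left completely unconstrained. In neither case is $y_i$ pinned to the interior value $x_i$, so a direct two-party bilinear interaction between $x$ and $y$ cannot implement the copy. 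The missing ingredient is a third, \emph{enforcer} variable: the paper duplicates each $x_i$ into a second min-variable $x_i'$, replaces $x_i^2$ by $x_ix_i'$, and adds $T\cdot\bigl(x_i' - x_i(1-2\eta) - \eta\bigr)(y_i - 1/2)$, where the max-variable $y_i$ plays only a Lagrange-multiplier role --- one first shows $y_i$ must be interior, whence its stationarity condition forces the coefficient $x_i' - x_i(1-2\eta)-\eta$ to nearly vanish, i.e.\ $x_i'\approx x_i$. The small offset $\eta$ is also needed so that $x_i'$ lands strictly inside $(0,1)$ and its own stationarity condition can be used to control the gadget's back-reaction on $\partial M/\partial x_i$. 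Without this restructuring (copying min-to-min, with the max-variable as tester), the Stage~I reduction as you describe it does not go through.
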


Our result is incomparable to the $\CLS$-hardness result proved by Anagnostides et al.~\cite{Anagnostides23} (which essentially immediately follows from the work of Babichenko and Rubinstein~\cite{BabichenkoR21-congestion}). On the one hand, our result is stronger because it applies to games with a simpler structure, namely polymatrix games, whereas their result only applies to the more general class of degree-5 polytensor games. On the other hand, our result is weaker because it requires the presence of multiple adversaries, instead of just a single adversary. Thus, our paper left open the case of two-team polymatrix games with a single adversary, which was resolved in subsequent work by Anagnostides et al.~\cite{anagnostides2025symmetric}.

As our second contribution, we complement the hardness result in \cref{theorem:CLS-hardness} by showing that the problem is in fact $\CLS$-complete in this particular case where the adversaries are independent (i.e., when there are no internal edges in the second team). Namely, the problem of finding an approximate Nash equilibrium in a two-team zero-sum polymatrix game with multiple independent adversaries lies in the class $\CLS$. The polymatrix setting allows us to provide a simple proof of this fact, in particular avoiding the use of more advanced machinery, such as the Moreau envelope used in the \CLS-membership of Anagnostides et al.~\cite{Anagnostides23}. Furthermore, leveraging the above result we show that a simple algorithm that utilizes (projected) gradient descent, finds an $\varepsilon$-Nash equilibrium and only has a $1/\varepsilon^2$ dependence in $\varepsilon$ in its running time. This is a quadratic improvement in the $\varepsilon$-dependence over the algorithm of Anagnostides et al.~\cite{Anagnostides23}, though their setting is not restricted to polymatrix games.

Going back to our main result, \cref{theorem:CLS-hardness}, we note that it also has some interesting consequences for minmax optimization in general. Namely, we obtain that computing a stationary point, i.e., a Karush-Kuhn-Tucker (KKT) point of
\[
\min_{x \in [0,1]^n} \max_{y \in [0,1]^m} f(x,y)
\]
is $\CLS$-hard, and thus intractable in the worst case, even when $f$ is a bilinear polynomial\footnote{Previous $\CLS$-hardness results required more general objective functions, namely $f$ had to be a quadratic (non-bilinear) polynomial \cite{Fearnley23}.} that is concave in $y$. This is somewhat surprising, as these objective functions are the simplest case beyond the well-known tractable setting of convex-concave.

\paragraph{The meaning of $\CLS$-hardness.}
The complexity class $\CLS$ was introduced by Daskalakis and Papadimitriou~\cite{DaskalakisP11-CLS} to capture the complexity of problems that are guaranteed to have a solution both by a fixed point argument and a local search argument. This class is a subset of two well-known classes: $\PPAD$ and $\PLS$. While $\PPAD$ is mainly known for capturing the complexity of computing Nash equilibria in general games \cite{daskalakis2009complexity,chen2009settling}, $\PLS$ captures the complexity of various hard local search problems, such as finding a locally maximal cut in a graph \cite{SchaefferY91-local-search}. Recently, following the result by Fearnley et al.~\cite{fearnley2022complexity} that $\CLS = \PPAD \cap \PLS$, it has been shown that the class captures the complexity of computing mixed Nash equilibria in congestion games \cite{BabichenkoR21-congestion}, and KKT points of quadratic polynomials \cite{Fearnley23}. See \cref{fig:tfnp-classes} for an illustration of the relationship between the classes.

A $\CLS$-hardness result indicates that the problem is very unlikely to admit a polynomial-time algorithm. To be more precise, our results indicate that we should not expect an algorithm to exist which can find an $\eps$-approximate Nash equilibrium in these two-team polymatrix games in time polynomial in $\log(1/\eps)$. Similarly, for the minmax problem with bilinear polynomial objective functions, we should not expect an algorithm that finds an $\eps$-KKT point in time polynomial in $\log(1/\eps)$. The evidence that $\CLS$ is a hard class is supported by various cryptographic lower bounds, which apply to both $\PPAD$ and $\CLS$ \cite{BitanskyPR15-Nash-crypto,ChoudhuriHKPRR19-Fiat-Shamir,JawaleKKZ21-PPAD-LWE}.

\begin{figure}
\centering
\begin{tikzpicture}
\node at (0,-2.3) {\textbf{\TFNP}};
\node at (-1.7,-3.75) {\textbf{\PLS}};
\node at (1.7,-3.75) {\textbf{\PPAD}};
\node at (0,-5) {\textbf{\CLS}};
\node at (0,-6.25) {\textbf{\P}};

\draw[-Stealth, very thick, gray] (1.3,-3.45) -- (0.5,-2.7);
\draw[-Stealth, very thick, gray] (-1.3,-3.45) -- (-0.5,-2.7);
\draw[-Stealth, very thick, gray] (-0.4,-4.7) -- (-1.2,-4.05);
\draw[-Stealth, very thick, gray] (0.4,-4.7) -- (1.2,-4.05);
\draw[-Stealth, very thick, gray] (0,-5.95) -- (0,-5.3);
\end{tikzpicture}
\caption{Classes of total search problems. Arrows are used to denote containment. For example, $\CLS$ is contained in $\PLS$ and in $\PPAD$. The class $\TFNP$ contains all total search problems, i.e., problems which are guaranteed to have efficiently checkable solutions. $\P$ contains all such problems solvable in polynomial time.}
\label{fig:tfnp-classes}
\end{figure}
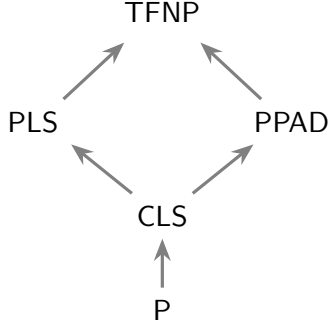

\subsection{Other Related Work}

%\sai{The first and third paragraphs below were added to our ICLR camera ready version, but not been updated in the arxiv yet. Maybe it offers additional motivation?} \alex{yeah, seems good}

\paragraph{Why two-team adversarial games/polymatrix games are interesting?} The study of team games was initiated by \cite{von1997team}, to model ``imperfect'' coordination within a company, when having to take strategic decisions in the presence of adversaries. In the field of AI agents, one can imagine such interactions are natural in settings where AI agents are trained to play team games, such as Starcraft \cite{vinyals2019grandmaster} and DoTA \cite{Berner2019}.

Meanwhile, polymatrix games are used to model pairwise interactions between players and these interactions can be specified as a graph. In some cases, polymatrix games offer tractable alternative models for multiplayer games, as NE in polymatrix zero-sum games are efficiently computable \cite{daskalakis2009network,Cai2016}. More generally, polymatrix games are used to model problems such as coordination games on graphs \cite{Apt2017, Apt2022} and this has applications in semi-supervised learning methods such as graph transduction \cite{Erdem2011}.

\paragraph{Additional related work on team games:}
It is worth mentioning that the computation of team maxmin equilibria for the two-team adversarial game has been studied where the team members are allowed to coordinate \emph{ex ante}, which is different from what a polymatrix coordination game would induce, as we study in this paper. There it was shown that the game can be reduced to a two-player game with imperfect-recall \cite{farina2018ex, celli2018computational, zhang2022subgame} and that efficient algorithms exist under some assumptions about the players' information sets \cite{zhang2021computing}. Finally, similar notions have been studied for extensive-form games too \cite{zhang2022team}. 

%\section{Additional Related Work}\label{sec:addrelwork}

\paragraph{Linear Convergence for Bilinear/Convex-Concave Minmax}
Although we study this problem motivated by two-team polymatrix games with adversaries, the hardness results that we show also apply to the simplest non-convex concave minmax problem, i.e., bilinear non-convex concave. Our results strictly rule out the possibility of obtaining linear convergence. In contrast, bilinear zero-sum games and convex-concave games admit algorithms that converge linearly to the NE, for example see \cite{wei2020linear,lei2021last, sokota2022unified, liu2022power}.

\section{Preliminaries}

\subsection{Polymatrix games}

A polymatrix game~\cite{Janovskaja1968-polymatrix} is a type of multiplayer game in which the payoff function can be succinctly represented. More precisely, there is a set of players $\mathcal{N}$ and a set of pure strategies $S_i$ for each player $i \in \mathcal{N}$. Moreover, players are represented by the vertices of an undirected graph $G = (\mathcal{N},E)$ with the intent that a matrix game is played between each pair of players $\{i,\, j\} \in E$; the pair of payoff matrices being denoted by $A^{i,j} \in \mathbb{R}^{S_i \times S_j}$ and $A^{j,i} \in \mathbb{R}^{S_j \times S_i}$. Players are allowed to randomize over their pure strategies and play a mixed strategy in the probability simplex of their action, which we denote by $\Delta(S_i)$ for player $i$. Hence, a mixed strategy profile is some $x = (x_1,\, x_2,\, \ldots,\, x_{|\mathcal{N}|}) \in \mathcal{X} \coloneqq \times_{i \in \mathcal{N}} \Delta(S_i)$. We also use the standard notation $x_{-i}$ to represent the mixed strategies of all players other than $i$. In polymatrix games, the utility $U_i:\mathcal{X} \rightarrow \mathbb{R}$ for player $i\in \mathcal{N}$ is the sum of her payoffs, so that
\[
U_i(x)=\sum_{j: \{i,j\} \in E}x_iA^{i,j}x_j
\]
where $x_i$ is understood to be $x_i^\top$: we drop the transpose when it is clear from the context for ease of presentation. As a well-defined class of games, polymatrix games always admit a Nash equilibrium \cite{nash1951non}. In this work, we are interested in \emph{approximate} Nash equilibria which we define next.

\begin{definition}\label{definition:approx-NE}
Let $\epsilon \geq 0$ be an approximation guarantee. The mixed strategy profile $\widetilde{x}$ is an $\epsilon$-approximate Nash equilibrium of the polymatrix game defined above if for any $i \in \mathcal{N}$, 
\[
 U_i(x_i,\widetilde{x}_{-i}) \leq U_i(\widetilde{x}_i,\widetilde{x}_{-i}) + \epsilon \quad \forall x_i \in \Delta(S_i)
\]
\end{definition}

In this paper, we focus on polymatrix games with a particular structure where players are grouped into two competing teams with players within teams sharing mutual interests.

\begin{definition}[Two-team Polymatrix Zero-Sum Game]\label{def:two-team}
A two-team polymatrix zero-sum game is a polymatrix game $\{A^{i, j}\}_{i, j \in \mathcal{N}}$ where the players can be split into two teams $X \cup Y = \mathcal{N}$ so that any game between the two teams is zero-sum and any game within a team is a coordination game. More precisely for any $i, i' \in X$ and $j, j' \in Y$:
\[
A^{i, i'} = \big( A^{i', i}\big)^\top \quad A^{j, j'} = \big( A^{j', j}\big)^\top \quad A^{i, j} = -\big(A^{j, i}\big)^\top
\]
If there is no coordination within team $Y$, that is $A^{j, j'} = 0^{S_j \times S_{j'}} $ for every $j, j' \in Y$, we further say that it is a two-team zero-sum game with independent adversaries.
\end{definition}

In the restricted context of two-teams games, another useful equilibrium concept is that of \emph{team-maxmin equilibria} \cite{von1997team}. While originally defined for a single adversary only, it is generalizable to multiple independent adversaries in such a way that any team-maxmin equilibrium can be converted to a Nash equilibrium efficiently. Unfortunately, similarly to the single-adversary case, such equilibria suffer from intractability issues and are $\FNP$-hard to compute \cite{basilico2017team, hansen2008approximability, borgs2008myth}.

\subsection{KKT Points of Constrained Optimization Problems}

We now turn our attention to solution concepts for optimization problems and in particular of Karush-Kuhn-Tucker (KKT) points. We only define the required notions for the special case where each variable is constrained to be in $[0, 1]$, since this will be sufficient for us. Under those \emph{box constraints} the expression of the Lagrangian simplifies greatly and we obtain the following definition of approximate KKT point for a minimization problem.

\begin{definition}\label{def:min-kkt}
Let $\epsilon \geq 0$ be an approximation parameter, $f: \mathbb{R}^n \to \mathbb{R}$ a continuously differentiable function and consider the optimization problem $\min\nolimits_{x \in [0, 1]^n} f(x)$. The point $\widetilde{x} \in [0, 1]^{n}$ is an $\epsilon$-approximate \textup{KKT} point of the formulation if the gradient $g \coloneqq \nabla f(\widetilde{x})$ satisfies for each $i \in [n]$:
\begin{enumerate}
    \item If $\widetilde{x}_i \in (0, 1)$, then $|g_i| \leq \epsilon$.
    \item If $\widetilde{x}_i = 0$, then $g_i \geq -\epsilon$.
    \item If $\widetilde{x}_i = 1$, then $g_i \leq \epsilon$.
\end{enumerate}
\end{definition}

Thus, an exact ($\epsilon = 0$) KKT point can be thought of as a fixed point of the gradient descent algorithm.  Using this intuition, we can extend this to minmax problems as fixed points of the gradient \emph{descent-ascent} algorithm. See \cref{fig:minmaxkkt} for the geometric intuition of minmax KKT points.

\begin{definition}\label{def:minmax-kkt}
Let $\epsilon \geq 0$ be an approximation parameter, $f: \mathbb{R}^n\times \mathbb{R}^n \to \mathbb{R}$ a continuously differentiable function and consider the optimization problem $\min\nolimits_{x \in [0, 1]^n}\max\nolimits_{y \in [0, 1]^n} f(x, y)$. Let $(\widetilde{x}, \widetilde{y}) \in [0, 1]^{2n}$ and let $(g, q) \coloneqq \nabla f(\widetilde{x}, \widetilde{y})$, where $g$ is the gradient with respect to $x$-variables and $q$ with respect to $y$-variables. Then, $(\widetilde{x}, \widetilde{y})$ is an \emph{$\epsilon$-approximate \textup{KKT} point} if for each $i \in [n]$:
\begin{multicols}{2}
    \begin{enumerate}
        \item If $x_i \in (0, 1)$, then $|g_i| \leq \epsilon$.
        \item If $x_i = 0$, then $g_i \geq -\epsilon$.
        \item If $x_i = 1$, then $g_i \leq \epsilon$.
        \item If $y_i \in (0, 1)$, then $|q_i| \leq \epsilon$.
        \item If $y_i = 0$, then $q_i \leq \epsilon$.
        \item If $y_i = 1$, then $q_i \geq -\epsilon$.
    \end{enumerate}
\end{multicols}
\end{definition}

\begin{figure}
    \centering
    \begin{tikzpicture}

%%%%%%%%%%%%%%%%%%%%%%%%%%%%%%%%%%%%%%%%%%%%%%%%%%%%%%%%%%%%
%%% Bounding box %%%%%%%%%%%%%%%%%%%%%%%%%%%%%%%%%%%%%%%%%%%
%%%%%%%%%%%%%%%%%%%%%%%%%%%%%%%%%%%%%%%%%%%%%%%%%%%%%%%%%%%%
\path[fill = blue!10!white, draw=gray] (0, 0) -- (3, 0) -- (3, 3) -- (0, 3) -- cycle;

%%%%%%%%%%%%%%%%%%%%%%%%%%%%%%%%%%%%%%%%%%%%%%%%%%%%%%%%%%%%
%%% Referential %%%%%%%%%%%%%%%%%%%%%%%%%%%%%%%%%%%%%%%%%%%%
%%%%%%%%%%%%%%%%%%%%%%%%%%%%%%%%%%%%%%%%%%%%%%%%%%%%%%%%%%%%
\path[->, thick] (-.5, 0) edge (3.5, 0);
\path[->, thick] (0, -.5) edge (0, 3.5);
\node[] at (0, 3.7) {\small $y$};
\node[] at (3.7, 0) {\small $x$};
\node[] at (3, -.3) {\small $1$};
\node[] at (-.3, 3) {\small $1$};
\node[] at (-.45, -.3) {\small $(0, 0)$};

%%%%%%%%%%%%%%%%%%%%%%%%%%%%%%%%%%%%%%%%%%%%%%%%%%%%%%%%%%%%
%%% OK gradients (clockwise starting from 0,0) %%%%%%%%%%%%%
%%%%%%%%%%%%%%%%%%%%%%%%%%%%%%%%%%%%%%%%%%%%%%%%%%%%%%%%%%%%
\path[-Stealth, draw=black!30!green, thick] (0, 2) node[fill=black!30!green, circle, minimum height=4, inner sep = 0]{} -- (.5, 2);
\path[-Stealth, draw=black!30!green, thick] (1, 3) node[fill=black!30!green, circle, minimum height=4, inner sep = 0]{} -- (1, 3.7);
\path[-Stealth, draw=black!30!green, thick] (1, 3) node[fill=black!30!green, circle, minimum height=4, inner sep = 0]{} -- (1, 3.7);
\path[-Stealth, draw=black!30!green, thick] (3, 3) node[fill=black!30!green, circle, minimum height=4, inner sep = 0]{} -- (2.4, 3.3);
\path[-Stealth, draw=black!30!green, thick] (3, 2) node[fill=black!30!green, circle, minimum height=4, inner sep = 0]{} -- (2.3, 2);
\node[] at (3.66, 2.1) {\small $(x_2, y_2)$}; % the label
\node[fill=black!30!green, circle, minimum height=4, inner sep = 0] at (2.6, 1.2) {};

%%%%%%%%%%%%%%%%%%%%%%%%%%%%%%%%%%%%%%%%%%%%%%%%%%%%%%%%%%%%
%%% NOT OK gradients (clockwise starting from 0,0) %%%%%%%%%
%%%%%%%%%%%%%%%%%%%%%%%%%%%%%%%%%%%%%%%%%%%%%%%%%%%%%%%%%%%%

\path[-Stealth, draw=black!30!red, thick] (0, .5) node[fill=black!30!red, circle, minimum height=4, inner sep = 0]{} -- (.75, 1);
\node[] at (-.66, .55) {\small $(x_1, y_1)$}; % the label
\path[-Stealth, draw=black!30!red, thick] (1.7, 3) node[fill=black!30!red, circle, minimum height=4, inner sep = 0]{} -- (1.3, 2.7);
\path[-Stealth, draw=black!30!red, thick] (1.7, 0) node[fill=black!30!red, circle, minimum height=4, inner sep = 0]{} -- (1.7, 0.9);
\path[-Stealth, draw=black!30!red, thick] (1.5, 1.5) node[fill=black!30!red, circle, minimum height=4, inner sep = 0]{} -- (1.8, 2.5); % the center one
\end{tikzpicture}
    \caption{The intuition behind exact ($\epsilon = 0$) KKT points for minmax problems. The formulation features a min-variable $x$, a max-variable $y$ and the bounding box constraint $(x, y) \in [0, 1]^2$. Some feasible points together with their respective gradient $(q^x, q^y)$ are depicted. Green points are valid KKT points whereas red ones are not. For instance, $(x_1, y_1)$ is not a KKT point because $q_1^y > 0$ but $y < 1$. On the other hand, $(x_2, y_2)$ is a valid KKT point because $q^y_2 = 0$ and $q^x_2 < 0$ with $x = 1$.}
    \label{fig:minmaxkkt}
\end{figure}
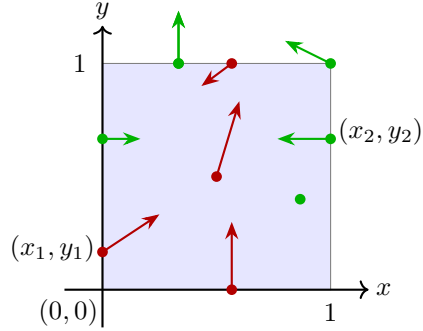

\subsection{Connection between two-team games and minmax optimization}

Given a two-team polymatrix zero-sum game with matrices $\{A^{i, j}\}_{i, j \in \mathcal{N}}$, we can define the following common utility function
$$U(x,y) = - \sum_{i, i' \in X: i < i'} x_i A^{i,i'} x_{i'} - \sum_{i \in X, j \in Y} x_i A^{i,j} y_j$$
and the corresponding game where players on the $X$-team all have the same utility function $-U$, and players on the $Y$-team all have the same utility function $U$.
It is easy to check that this new game is equivalent to the previous one, namely a strategy profile $(x,y)$ is an $\eps$-Nash equilibrium in the latter game if and only if it is in the former. Now, if we consider the minmax optimization problem
\[
\min_{x \in \mathcal{X}}\max_{y \in \mathcal{Y}} U(x,y)
\]
it is not hard to show that its KKT points exactly correspond to the set of Nash equilibria of the game with common utility, and thus also of the original polymatrix game. This connection also extends to approximate solutions. We develop this relation in greater detail in section \cref{sec:hardness}, where we first prove a hardness result for a minmax problem and then use the equivalence above to extend it to polymatrix games.

\section{\texorpdfstring{$\CLS$}{CLS}-hardness of Two-Team Games with Independent Adversaries}
\label{sec:hardness}

In this section, we give a proof of our main hardness result which we re-state formally next.

\begin{theorem}[Precise formulation of \cref{theorem:CLS-hardness}]
It is $\CLS$-hard to find an $\epsilon$-approximate Nash equilibrium of a two-team zero-sum polymatrix game with independent adversaries where $\epsilon$ is inverse exponential in the description of the game.
\end{theorem}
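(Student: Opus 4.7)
The plan is to prove the theorem by reducing from $\MinQuadKKT$, the problem of finding an $\varepsilon$-approximate KKT point of a quadratic minimization over the unit box, which is known to be $\CLS$-hard for $\varepsilon$ inverse exponential in the instance size by Fearnley et al.~\cite{Fearnley23}. The reduction proceeds in two steps: first, I would produce an intermediate bilinear minmax instance (the problem $\MinmaxIndKKT$) whose KKT points encode those of the input quadratic; second, I would convert this minmax instance into a two-team polymatrix game with independent adversaries via the correspondence outlined in the preliminaries.

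For the first step, given an instance $q(x) = \sum_{i<j} A_{ij} x_i x_j + \tfrac{1}{2}\sum_i A_{ii} x_i^2 + \sum_i b_i x_i$ of $\MinQuadKKT$, I would introduce for each variable $x_i$ a fresh min-variable $x_i'$ and a fresh pair of max-variables $y_i^+, y_i^-$, and define the bilinear objective
\[
f(x,x',y) = \sum_{i<j} A_{ij} x_i x_j + \tfrac{1}{2}\sum_i A_{ii}\, x_i x_i' + \sum_i b_i x_i + \gamma \sum_i \bigl[y_i^+(x_i - x_i') + y_i^-(x_i' - x_i)\bigr],
\]
where $\gamma$ is a polynomially-bounded parameter that dominates the coefficients of $q$. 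The term $\tfrac{1}{2} A_{ii}\, x_i x_i'$ is a bilinear surrogate for the forbidden $\tfrac{1}{2} A_{ii} x_i^2$, while the $\gamma$-scaled copy gadget forces $x_i' \approx x_i$ at any approximate KKT point via the $y_i^{\pm}$-KKT conditions, giving $|x_i - x_i'| \leq \varepsilon/\gamma$. The crucial observation is that \emph{adding} the $x_i$-KKT and $x_i'$-KKT conditions cancels the $\gamma(y_i^+ - y_i^-)$ contribution coming from the copy gadget, yielding $|\nabla q(x)_i| \leq O(\varepsilon + |A_{ii}|\varepsilon/\gamma)$ in the interior case, with the boundary inequalities combining analogously. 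Thus, choosing $\gamma$ large enough (but still of polynomial bit-length) and taking $\varepsilon$ slightly smaller than the target approximation of $\MinQuadKKT$ recovers a valid $\varepsilon'$-approximate KKT point of $q$ for any desired inverse-exponential $\varepsilon'$.

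For the second step, using the connection sketched in \cref{sec:hardness}, the resulting minmax instance is encoded as a two-team polymatrix zero-sum game in which each scalar variable becomes a two-strategy player (with its value being the probability of playing the ``1''-strategy) and each monomial of $f$ induces an edge with a suitably-scaled payoff matrix. Since all monomials are either of the form $x_a x_b$ (coordination edges inside the $X$-team) or of the form $x_a y_b^{\pm}$ (zero-sum edges between the teams), and no monomial is of the form $y_a y_b$, the resulting game matches \cref{def:two-team} with independent adversaries, and $\varepsilon$-approximate Nash equilibria correspond (up to a polynomial rescaling of $\varepsilon$) to $\varepsilon$-approximate KKT points of the minmax instance. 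The main obstacle of the proof is a careful treatment of the boundary behavior of the copy gadget: one must rule out the pathological configurations in which $y_i^+$ and $y_i^-$ take opposite extreme values while $x_i$ and $x_i'$ sit on opposite boundaries of $[0,1]$, by showing that they would require the $x_i$- or $x_i'$-gradient to exceed $\gamma$ in magnitude and are therefore infeasible at any $\varepsilon$-KKT point once $\gamma$ is chosen large enough relative to $\|A\|_\infty$ and $\|b\|_\infty$.
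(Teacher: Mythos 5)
Your overall architecture is the same as the paper's: reduce $\MinQuadKKT$ to a bilinear minmax problem via a copy gadget that duplicates $x_i$ into $x_i'$ so that $x_i^2$ can be replaced by $x_ix_i'$, then encode the resulting minmax instance as a two-team polymatrix game with one two-strategy player per variable. Your Stage II matches the paper's construction in outline (including the need for a quadratic, hence still polynomial, rescaling of $\eps$ coming from dividing the Nash condition by a probability bounded below by $\mathrm{poly}(\eps)$). Your Stage I gadget, however, differs from the paper's: you use two max-variables $y_i^{\pm}$ multiplying $\pm(x_i-x_i')$, whereas the paper uses a single $y_i$ multiplying the \emph{affinely perturbed} expression $\bigl(x_i'-x_i(1-2\eta)-\eta\bigr)(y_i-1/2)$. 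That perturbation is not cosmetic, and its absence creates a genuine gap in your argument.

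The gap is in the claim that adding the $x_i$- and $x_i'$-KKT conditions yields $|\nabla q(x)_i|\le O(\eps)$ ``in the interior case.'' The cancellation of the $\gamma(y_i^+-y_i^-)$ terms only produces a \emph{two-sided} bound on $g_i+g_{i'}$ when \emph{both} $x_i$ and $x_i'$ are interior (or both sit on the same boundary). Since your gadget only enforces $|x_i-x_i'|\le\eps/\gamma$, it is entirely possible that $x_i\in(0,\eps/\gamma]$ is interior while $x_i'=0$, in which case you only get the one-sided bound $\partial f/\partial x_i'\ge-\eps$, and the sum $g_i+g_{i'}$ has no useful upper bound. This is not a hypothetical: take $q(x_1)=Zx_1$ with $Z$ large, and the point $x_1=\eps/(2\gamma)$, $x_1'=0$, $y_1^+=0$, $y_1^-=Z/\gamma$. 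One checks that $\partial f/\partial x_1=Z+\gamma(y_1^+-y_1^-)=0$, $\partial f/\partial x_1'=Z\ge-\eps$, $\partial f/\partial y_1^+=\eps/2\le\eps$, and $|\partial f/\partial y_1^-|=\eps/2\le\eps$, so this is an $\eps$-KKT point of $f$; yet $x_1\in(0,1)$ and $\nabla q(x_1)=Z\gg\eps'$, so $x$ is not an approximate KKT point of $q$. The paper's $\eta$-perturbation exists precisely to preclude this: it forces $x_i'\in[\eta-\delta,1-\eta+\delta]\subset(0,1)$ at every approximate KKT point, so the $x_i'$-condition is always the two-sided $|g_{i'}|\le\delta$. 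Your construction can be repaired without changing the gadget by adding a rounding step (map $x_i$ to $0$ or $1$ when it lies within $\eps/\gamma$ of that boundary, which perturbs $\nabla q$ by at most $O(Z\eps/\gamma)$ and reduces the problematic case to a one-sided boundary condition), but as written the ``interior case'' conclusion is false. The pathological $y_i^{\pm}$ configurations you do address (opposite extreme values) are indeed correctly ruled out by your argument; the missed case is the subtler one.
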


We show that this is true even when each player can only choose from two pure strategies. This lower bound result is obtained by reducing from the problem of finding some KKT point of a quadratic minimization problem. In detail, an instance of $\MinQuadKKT$ consists of a quadratic polynomial $Q: \mathbb{R}^n \to \mathbb{R}$ together with an approximation parameter $\epsilon > 0$. A solution to such instance is any $\epsilon$-approximate KKT point of $\min_{x \in [0, 1]^n} Q(x)$ (see \cref{def:min-kkt}). $\MinQuadKKT$ is known to be $\CLS$-complete for $\epsilon$ inverse exponential in the description of the instance~\cite{Fearnley23}. The reduction from $\MinQuadKKT$ to two-team games is performed in two stages which we describe next.

\paragraph{\textbf{Stage I.}} As a first step, we show how to reduce $\MinQuadKKT$ to the intermediate problem $\MinmaxIndKKT$. An instance of $\MinmaxIndKKT$ consists of an approximation parameter $\epsilon > 0$ together with a polynomial $M: \mathbb{R}^{2n} \to \mathbb{R}$ satisfying the following three properties:
\begin{enumerate}
\item $M$ is multilinear.
\item $M$ has degree at most 2.
\item $M(x, y)$ has no monomial of type $y_iy_j$ for $i, j \in [n]$.
\end{enumerate}
In this section, we conveniently refer to those three conditions as the \emph{independence} property. A solution to such an instance is any $\epsilon$-approximate KKT point of $\min_{x \in [0,1]^n} \max_{y \in [0,1]^n} M(x, y)$ (see \cref{def:minmax-kkt}). Let us highlight that this step already establishes the $\CLS$-hardness of uncoupled minmax multilinear formulations for a simple class of objectives.

\paragraph{\textbf{Stage II.}} We exploit the independence property of the polynomial generated by the first stage reduction to reduce it further to a two-team zero-sum polymatrix game with independent adversaries. This is achieved through generalizing the equivalence between zero-sum games for two players and some class of minmax formulations.

\subsection{Stage I: from quadratic to multilinear}

\begin{lemma}\label{lemma:quadratic_to_multilinear}
$\MinQuadKKT$ reduces to $\MinmaxIndKKT$
\end{lemma}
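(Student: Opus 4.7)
The plan is to encode $Q$ as a multilinear polynomial $M$ by introducing, for each $i \in [n]$, one auxiliary min-variable $z_i$ (intended as a duplicate of $x_i$) and two auxiliary max-variables $y_i, y_i'$. Let $f(x, z)$ be the polynomial obtained from $Q(x)$ by replacing every occurrence of $x_i^2$ with $x_i z_i$; it is multilinear of degree at most two and satisfies $f(x, x) = Q(x)$. Set
\[
M(x, z, y, y') \;=\; f(x, z) + K \sum_{i=1}^n \bigl[ y_i(x_i - z_i) + y_i'(z_i - x_i) \bigr]
\]
for an integer $K$ chosen polynomial in the bit-size of $Q$. By construction $M$ is multilinear, has degree at most two, and contains no $y_iy_j$ monomial, since every $y$-variable appears paired with an $x$- or $z$-variable in the \copyGadget{} gadget. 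Hence $(M, \epsilon')$ is a valid instance of $\MinmaxIndKKT$ for any $\epsilon' > 0$.

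The technical heart of the argument is to show that every $\epsilon'$-approximate KKT point $(\tilde x, \tilde z, \tilde y, \tilde y')$ of $M$ satisfies $|\tilde x_i - \tilde z_i| \leq \epsilon'/K$ for each $i$. Since $\partial_{y_i} M = K(\tilde x_i - \tilde z_i)$ and $\partial_{y_i'} M = K(\tilde z_i - \tilde x_i)$, inspecting the max-player's KKT conditions at the four corners of the $(\tilde y_i, \tilde y_i')$-box immediately yields the bound in every case except the two ``opposite-corner'' configurations $(1,0)$ and $(0,1)$. In the configuration $(\tilde y_i, \tilde y_i') = (1,0)$ (the other is symmetric), the \copyGadget{} gadget contributes $+K$ to $\partial_{x_i} M$ and $-K$ to $\partial_{z_i} M$; choosing $K$ to exceed the $\ell_\infty$-norm of $\nabla f$ on $[0,1]^{2n}$ by more than $\epsilon'$, the min-player's KKT conditions force $\tilde x_i = 0$ and $\tilde z_i = 1$. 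But the max-player's condition at $\tilde y_i = 1$ demands $\tilde x_i \geq \tilde z_i - \epsilon'/K$, i.e., $0 \geq 1 - \epsilon'/K$, contradicting $K > \epsilon'$. These corner configurations are therefore infeasible.

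To convert an $\epsilon'$-KKT point of $M$ into an $\epsilon$-KKT point of $Q$, I would define $\hat x$ by snapping each $\tilde x_i$ within $\epsilon'/K$ of an endpoint of $[0,1]$ to that endpoint. A short calculation yields the key identity
\[
\partial_{x_i} M + \partial_{z_i} M \;=\; \partial_{x_i} f + \partial_{z_i} f \;=\; \partial_{x_i} Q(\tilde x) + c_{ii}(\tilde z_i - \tilde x_i),
\]
where $c_{ii}$ is the coefficient of $x_i^2$ in $Q$: the gadget's $K$-contributions cancel exactly across the sum, and the residual is $O(\epsilon'/K)$. The rounding step guarantees that whenever $\hat x_i \in \{0,1\}$, both $\tilde x_i$ and $\tilde z_i$ lie within $O(\epsilon'/K)$ of the same endpoint, so the min-player's one-sided KKT conditions on them combine into the required one-sided bound on $\partial_{x_i} Q(\hat x)$; when $\hat x_i$ is strictly interior, both $\tilde x_i$ and $\tilde z_i$ are interior as well, giving the two-sided bound. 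Taking $\epsilon'$ to be $\epsilon$ divided by a polynomial in the coefficients of $Q$ and $K$ polynomial in the instance size delivers a polynomial-time reduction that preserves the inverse-exponential approximation scale. The main obstacle is precisely the opposite-corner case analysis, which is what dictates the use of \emph{two} max-variables per copy and a multiplier $K$ large enough to dominate $\nabla f$; without either ingredient one obtains only a one-sided bound on $\tilde x_i - \tilde z_i$ and the translation to a KKT point of $Q$ breaks down.
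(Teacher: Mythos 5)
Your reduction is correct, but the gadget is genuinely different from the paper's. The paper uses a \emph{single} max-variable per copy, with $\copyGadget(x_i,x_i',y_i) = (x_i' - x_i(1-2\eta) - \eta)(y_i - 1/2)$: the factor $(y_i - 1/2)$ provides the bidirectional enforcement that you obtain by doubling the max-variables, and the small $\eta$-contraction of the copy constraint guarantees that at any approximate KKT point the duplicate $x_i'$ lies strictly in $(0,1)$. That interiority is the paper's substitute for your rounding step: since the KKT condition on $x_i'$ is then always two-sided, one solves $T(1/2 - y_i) \approx \partial Q'/\partial x_i'$ and substitutes into $\partial M/\partial x_i$, so the type of KKT condition (at $0$, at $1$, or interior) transfers from $x_i$ in $M$ to $x_i$ in $Q$ verbatim, with no snapping; the paper's exclusion of $y_i \in \{0,1\}$ via the largeness of $T$ is the analogue of your opposite-corner argument. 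Your route pays for omitting the $\eta$-trick with two extra ingredients — the second max-variable $y_i'$ and the rounding of $\tilde x$ to $\hat x$ — and your summing identity $\partial_{x_i}M + \partial_{z_i}M = \partial_{x_i}Q(\tilde x) + c_{ii}(\tilde z_i - \tilde x_i)$ correctly recovers the gradient of $Q$ because the $\pm K$ contributions cancel; the one point you must (and do) guarantee is that the one-sided conditions on $\tilde x_i$ and $\tilde z_i$ point the same way, which follows from the snapping radius together with $|\tilde x_i - \tilde z_i| \le \eps'/K$. One small fix: $K$ should be of polynomial \emph{bit-size} (e.g., $K = 10Z$ with $Z$ the sum of absolute coefficients, as the paper takes $T$), not polynomially bounded in value, since $\|\nabla f\|_\infty$ can be exponential in the bit-size of $Q$.
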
 
By ``reduces to'', we mean the usual definition of polynomial-time $\TFNP$ reduction. We begin by describing the reduction in detail, then highlight that it yields an objective with the independence property and finally prove correctness. Let $Q: [0, 1]^n \to \mathbb{R}$ and $\epsilon > 0$ be the $\MinQuadKKT$ instance. Write the coefficients of $Q$ explicitly as follows:
\[
Q(x) = q + \sum\nolimits_{i \in [n]} q_ix_i + \sum\nolimits_{i \neq j} q_{ij} x_ix_j + \sum\nolimits_{i \in [n]} q_{ii} x_i^2
\]
Let $Z \geq 1$ be an upper bound on the sum of the absolute values of all coefficients of $Q$ and note that since $Q$ is a quadratic polynomial, it holds that for every $x \in [0, 1]^n$ and $i \in [n]$:
\begin{equation}\label{eq:bound_on_quadratic}
|Q(x)| \leq Z \quad\text{and}\quad \left\lvert \frac{\partial Q(x)}{\partial x_i} \right\rvert \leq 2Z
\end{equation}
Fix $T \coloneqq 10Z$ and  $\eta \coloneqq \epsilon/T$. For each min-variable $x_i$ of the $\MinQuadKKT$ instance, we introduce two min-variables $x_i$ and $x_i'$ and a max-variable $y_i$ in the $\MinmaxIndKKT$ instance. We also define the following multilinear ``copy'' gadget:
\[
\copyGadget(x_i, x_i', y_i) \coloneqq \Big(x_i' - x_i \cdot (1 - 2\eta) - \eta\Big) \cdot (y_i - 1/2)
\]
The role of $\copyGadget$ is to force $x_i$ to be close to $x_i'$ at any KKT point, thus effectively \emph{duplicating} $x_i$ into $x_i'$. This allows us to remove square terms of the objective function. To make this formal, let $Q': [0,1]^{2n} \to \mathbb{R}$ be a copy of $Q$ where every occurrence of the form $x_i^2$ is replaced by $x_ix_i'$:
\[
Q'(x, x') \coloneqq q + \sum\nolimits_{i \in [n]} q_ix_i + \sum\nolimits_{i \neq j} q_{ij} x_ix_j + \sum\nolimits_{i \in [n]} q_{ii} x_ix_i'
\]
The full formulation of the $\MinmaxIndKKT$ instance is stated below. We note that the objective $M$ of the formulation indeed satisfies the independence property.\footnote{Additional dummy max-variables can be added to ensure that the number of min- and max-variables are the same.} The proof of \cref{lemma:quadratic_to_multilinear} now follows from \cref{claim:CLS-hardness-main}.
\[
\min_{x, x' \in [0,1]^n} \max_{y \in [0,1]^n} M(x, x', y) \quad\text{where}\quad M(x, x', y) \coloneqq Q'(x, x') + \sum_{i \in [n]} T \cdot \textsc{copy}(x_i, x_i', y_i)
\]

\begin{claim}
\label{claim:CLS-hardness-main}
For any $\epsilon \in (0, 1]$, if $(x, x', y)$ is an $(\epsilon/10)$-approximate KKT point of the $\MinmaxIndKKT$ instance, then $x$ is an $\epsilon$-approximate KKT point of the $\MinQuadKKT$ instance.
\end{claim}

The proof relies crucially on $T$ being sufficiently large so that the copy gadgets dominate $Q'(x, x')$ and the objective $M$. This forces any KKT point $(x, x', y)$ to have $y \in (0,1)^n$ and ultimately $x \approx x'$. A second step shows that $\partial Q(x) / \partial x_i \approx \partial M(x, x', y) / \partial x_i$, which is enough to conclude that $x$ is a KKT point of the $\MinQuadKKT$ instance.

\begin{proof}
Let $\delta \coloneqq \epsilon/10$ be the $\MinmaxIndKKT$ approximation guarantee for the remainder of this argument. Recall that $\eta = \epsilon/T = \delta/Z$. As a first step, we show that $y_i \in (0, 1)$ for each $i \in [n]$. Towards a contradiction, let us first suppose that $y_i = 1$ and observe that:
\begin{align*}
    \frac{\partial M(x, x', y)}{\partial x_i'} &= \frac{\partial Q'(x, x')}{\partial x_i'} + T \cdot (y_i - 1/2)\\
    &= \frac{\partial Q'(x, x')}{\partial x_i'} + T/2\\
    &\geq -2Z + T/2 &\text{using \eqref{eq:bound_on_quadratic}}\\
    &> \delta &\text{as $T = 10Z$ and $Z \geq 1$}
\end{align*}
But as $(x, x', y)$ is a $\delta$-approximate KKT point and $x_i'$ is a min-variable, it must be that $x_i' = 0$ (see \cref{def:minmax-kkt}). As a result, we obtain:
\[
\frac{\partial M(x, x', y)}{\partial y_i} = T (x_i' - x_i \cdot (1 - 2\eta) - \eta) \leq - T \eta < - \delta
\]
where we used the fact that $x_i \geq 0$ and $\eta = \delta/Z > \delta/T$.
Now, because $y_i$ is a max-variable and $(x, x', y)$ a $\delta$-KKT point, it must be that $y_i = 0$: a contradiction to the assumption that $y_i = 1$. One can rule out the possibility of $y_i = 0$ in a similar way and we may thus conclude that $y_i \in (0, 1)$. This fact, together with the KKT conditions, further implies that the partial derivative of $M$ with respect to $y_i$ almost vanishes for each $i \in [n]$, i.e.,
\begin{equation}\label{eq:CLS-hardness-smallalpha}
T \left\lvert x_i' - x_i \cdot (1 - 2\eta) - \eta \right\rvert = \left\lvert\frac{\partial M(x, x', y)}{\partial y_i} \right\rvert \leq \delta
\end{equation}
This shows that $x_i' \in [\eta - \delta/T, 1 - \eta + \delta/T]$ and combining with $\eta > \delta/T$ it follows that $x_i' \in (0, 1)$ and the KKT conditions imply:
\[
\delta \geq \left\lvert \frac{\partial M(x, x', y)}{\partial x_i'} \right\rvert = \left \lvert \frac{\partial Q'(x, x')}{\partial x_i'} + T \cdot (y_i - 1/2) \right\rvert \,\implies\, T \cdot (1/2 - y_i) = \frac{\partial Q'(x, x')}{\partial x_i'} \pm \delta
\]
Where we use the notation $a = b \pm \delta$ to mean $a \in [b - \delta, b + \delta]$. We can now show that the partial derivative of $M$ at $(x, x', y)$ is very close to the one of $Q$ at $x$ for all coordinates $x_i$:
\begin{align*}
    \frac{\partial M(x, x', y)}{\partial x_i} &= \frac{\partial Q'(x, x')}{\partial x_i} + (1 - 2\eta) \cdot T \cdot (1/2 - y_i)\\
    &= \frac{\partial Q'(x, x')}{\partial x_i} + (1 - 2\eta) \cdot \left( \frac{\partial Q'(x, x')}{\partial x_i'} \pm \delta \right)\\
    &= \frac{\partial Q'(x, x')}{\partial x_i} +  \frac{\partial Q'(x, x')}{\partial x_i'}   \pm \left( 2\eta \left\lvert \frac{\partial Q'(x, x')}{\partial x_i'} \right\rvert +  \delta \right)\\
    &= \frac{\partial Q'(x, x')}{\partial x_i} +  \frac{\partial Q'(x, x')}{\partial x_i'}  \pm \left( 4\eta Z +  \delta \right)\\
    &= \frac{\partial Q'(x, x')}{\partial x_i} +  \frac{\partial Q'(x, x')}{\partial x_i'}  \pm 5 \delta &\text{as $\eta = \delta/Z$}
\end{align*}
Observe that \eqref{eq:CLS-hardness-smallalpha} also implies that $x_i$ and $x_i'$ must be close with $|x_i - x_i'| \leq \delta/T + \eta \leq 2\eta$, hence:
\begin{align*}
\frac{\partial Q'(x, x')}{\partial x_i} +  \frac{\partial Q'(x, x')}{\partial x_i'} &= q_i + \sum_{j \neq i} q_{ij} x_j + q_{ii} x_i + q_{ii} x_i'\\
&= \frac{\partial Q(x)}{\partial x_i} + q_{ii} (x_i' - x_i)\\
&= \frac{\partial Q(x)}{\partial x_i} \pm 2 \delta &\quad\text{as $|q_{ii}| \leq Z$ and $2\eta = 2\delta / Z$}
\end{align*}
Combining the two previous observations, we have $\partial Q(x)/\partial x_i = \partial M(x, x', y)/\partial x_i \pm 7 \delta$. With this fact established, we may finally show that $x$ is indeed an $\epsilon$-KKT point of the $\MinQuadKKT$ formulation. If $x_i < 1$, then note that:
\[
\frac{\partial Q(x)}{\partial x_i} \geq \frac{\partial M(x, x', y)}{\partial x_i} - 7 \delta \geq - \delta - 7 \delta \geq - 8 \delta \geq - \epsilon
\]
Similarly, if $x_i > 0$, we have:
\[
\frac{\partial Q(x)}{\partial x_i} \leq \frac{\partial M(x, x', y)}{\partial x_i} + 7 \delta \leq \delta + 7 \delta \leq 8 \delta \leq \epsilon. \qedhere
\]
\end{proof}

\subsection{Stage II: from minmax to two-team games}
To prove \cref{theorem:CLS-hardness}, we use stage I (\cref{lemma:quadratic_to_multilinear}) and show how to reduce an instance of $\MinmaxIndKKT$ to a two-team zero-sum game with independent adversaries. Let $\epsilon > 0$ be the approximation parameter of the instance and its objective be $M: \mathbb{R}^{2n} \to \mathbb{R}$. Since $M$ has the independence property, we can explicitly write its coefficient as:
\[
M(x, y) \coloneqq \alpha + \sum\nolimits_{i \in [n]} \beta_i x_i + \sum\nolimits_{i \neq j} \gamma_{ij} x_ix_j + \sum\nolimits_{i \in [n]} \zeta_i y_i + \sum\nolimits_{i, j \in [n]} \theta_{ij} x_iy_j
\]
We construct a polymatrix game with two teams, each consisting of $n$ players. In the first (cooperative) team, there is one player $a_i$ corresponding to each variable $x_i$. The second team consists of $n$ independent adversaries, where player $b_i$ corresponds to variable $y_i$. The intent is that an optimal strategy profile of the players roughly corresponds to a KKT point $(x, y)$ of $M$. As stated earlier, we reduce to a very restricted setting where each player only has two actions. We thus specify the utility matrices of the game as elements of $\mathbb{R}^{2 \times 2}$ with:
\begin{align}
A^{a_i, a_j} &= A^{a_j, a_i} = \begin{bmatrix} -\gamma_{ij} & 0 \\ 0 & 0 \end{bmatrix} &\quad \text{for all $i, j\in[n]$ with $i \neq j$} \label{eq:matrix_internal}\\
A^{b_j, a_i} &= - (A^{a_i,b_j})^\top = \begin{bmatrix} \theta_{ij} + \zeta_j/n + \beta_i/n & \zeta_j/n \\ \beta_i/n & 0 \end{bmatrix} &\quad \text{for all $i, j\in[n]$} \label{eq:matrix_across}
\end{align}
Any other utility matrix is set to $0^{2 \times 2}$. Observe that this payoff setting indeed yields a proper two-team zero-sum polymatrix game with independent adversaries. Let $Z \geq 1$ be an upper bound on the sum of absolute coefficients of $M$ and let $\delta \coloneqq \epsilon^2/(4Z)$ be the target approximation ratio for the polymatrix game. If $(p,q)$ is a $\delta$-approximate Nash equilibrium of the polymatrix game, we define a candidate KKT point $(x, y) \in \mathbb{R}^{2n}$ as follows:
\begin{align*}
    x_i = \begin{cases}
        0 &\quad\text{if $p_i < \epsilon/(2Z)$}\\
        1 &\quad\text{if $p_i > 1 - \epsilon/(2Z)$}\\
        p_i &\quad\text{else}
    \end{cases}
    \quad\quad\text{and}\quad\quad
    y_i = \begin{cases}
        0 &\quad\text{if $q_i < \epsilon/(2Z)$}\\
        1 &\quad\text{if $q_i > 1 - \epsilon/(2Z)$}\\
        q_i &\quad\text{else}
    \end{cases}
\end{align*}
Here $p_i \in [0,1]$ represents the probability that player $a_i$ plays its first action, and similarly $q_i$ for player $b_i$. The correctness of the reduction is treated in \cref{claim:correctness_stage_II} and thus \cref{theorem:CLS-hardness} follows.

\begin{claim}
\label{claim:correctness_stage_II}
$(x, y)$ is an $\epsilon$-approximate KKT point of the $\MinmaxIndKKT$ instance.
\end{claim}

\begin{proof}
We only show that $x$-variables satisfy the KKT conditions as the proof is similar for the $y$-variables. It follows from \eqref{eq:matrix_internal} and \eqref{eq:matrix_across} that for any $\widetilde{p}_i \in [0, 1]$ the utility for player $a_i$ can be written as:
\begin{equation*}
\begin{split}
U_{a_i}(\widetilde{p}_i, p_{-i}, q) &= \sum\nolimits_{j \neq i} \begin{bmatrix} \widetilde{p}_i \\ 1-\widetilde{p}_i \end{bmatrix}^\top A^{a_i,a_j} \begin{bmatrix} p_j \\ 1-p_j \end{bmatrix} + \sum\nolimits_{j \in [n]} \begin{bmatrix} \widetilde{p}_i \\ 1-\widetilde{p}_i \end{bmatrix}^\top A^{a_i,b_j} \begin{bmatrix} q_j \\ 1-q_j \end{bmatrix} \\
&= - \widetilde{p}_i \cdot \left(\beta_i + \sum\nolimits_{j \neq i} \gamma_{ij} p_j + \sum\nolimits_{j \in [n]} \theta_{ij} q_j \right) - \sum\nolimits_{j \in [n]} \zeta_j q_j /n
\end{split}
\end{equation*}
Since $(p,q)$ is a $\delta$-approximate Nash equilibrium, we can use the above expression twice and \cref{definition:approx-NE} to get that:
\begin{equation}\label{eq:eta_a}
(p_i - \widetilde{p}_i) \cdot \left(\beta_i + \sum\nolimits_{j \neq i} \gamma_{ij} p_j + \sum\nolimits_{j \in [n]} \theta_{ij} q_j  \right) \leq \delta \quad \forall\, \widetilde{p}_i \in [0,1]
\end{equation}
Fix some variable $x_i$ and let us verify that it satisfies the $\epsilon$-approximate KKT condition. The partial derivative of $M$ with respect to $x_i$ is:
\begin{align*}
\frac{\partial M(x, y)}{\partial x_i} &= \beta_i + \sum_{j \neq i} \gamma_{ij} x_j + \sum_{j \in [n]} \theta_{ij} y_j\\
&= \beta_i + \sum_{j \neq i} \gamma_{ij} \cdot (p_j + x_j - p_j) + \sum_{j \in [n]} \theta_{ij} \cdot (q_j + y_j - q_j) \\
&= \beta_i + \sum_{j \neq i} \gamma_{ij} p_j + \sum_{j \in [n]} \theta_{ij} q_j \pm \Bigg( \sum_{j \neq i} |\gamma_{ij}| \cdot |x_j - p_j| + \sum_{j \in [n]} |\theta_{ij}| \cdot |y_j - q_j| \Bigg)\\
&= \beta_i + \sum_{j \neq i} \gamma_{ij}p_j + \sum_{j \in [n]} \theta_{ij} q_j \pm \frac{\epsilon}{2Z} \cdot Z
\end{align*}
Where in the last equality, we used the fact that $x$ (respectively $y$) is close to $p$ (respectively $q$).

We now finish the proof by considering two cases for $x_i$. First, consider the case where $x_i < 1$. By definition of $x_i$, this implies that $p_i \leq 1 - \epsilon/2Z$. Thus, setting $\widetilde{p}_i \coloneqq 1$ in \eqref{eq:eta_a}, we get:
\begin{align*}
&(1 - p_i) \cdot \left(\beta_i + \sum\nolimits_{j \neq i} \gamma_{ij} p_j + \sum\nolimits_{j \in [n]} \theta_{ij} q_j  \right) \geq -\delta\\
&\quad\implies \beta_i + \sum\nolimits_{j \neq i} \gamma_{ij} p_j + \sum\nolimits_{j \in [n]} \theta_{ij} q_j \geq -\frac{\delta}{1 - p_i} \geq -\frac{2Z\cdot \delta}{\epsilon} \geq -\epsilon/2
\end{align*}
and thus $\partial M(x, y) / \partial x_i \geq -\epsilon/2 - \epsilon/2 \geq - \epsilon$. Next, consider the case where $x_i > 0$. By definition of $x_i$, this implies that $p_i \geq \epsilon/2Z$. Now, setting $\widetilde{p}_i \coloneqq 0$ in \eqref{eq:eta_a}, we get:
\begin{align*}
&p_i \cdot \left(\beta_i + \sum\nolimits_{j \neq i} \gamma_{ij} p_j + \sum\nolimits_{j \in [n]} \theta_{ij} q_j  \right) \leq \delta\\
&\quad\implies \beta_i + \sum\nolimits_{j \neq i} \gamma_{ij} p_j + \sum\nolimits_{j \in [n]} \theta_{ij} q_j \leq \frac{\delta}{p_i} \leq \frac{2Z\cdot \delta}{\epsilon} \leq \epsilon/2
\end{align*}
and thus $\partial M(x, y) / \partial x_i \leq \epsilon/2 + \epsilon/2 \leq \epsilon$. This shows that $x_i$ always satisfies the $\epsilon$-KKT conditions. In particular, when $x_i \in (0,1)$, $|\partial M(x, y) / \partial x_i| \leq \epsilon$.
\end{proof}

\section{\texorpdfstring{$\CLS$}{CLS}-membership and Approximation Algorithm for Independent Adversaries}

In this section we prove upper bounds for two-team zero-sum polymatrix games with independent adversaries. Namely, we first prove $\CLS$-membership for the problem and then use this formulation to obtain an approximation algorithm. Our approach leverages the additional structure of the polymatrix game to obtain a better running time than the algorithm in~\cite{Anagnostides23}, which applies to a class of two-team games that is more general than polymatrix.

\subsection{\texorpdfstring{$\CLS$}{CLS}-membership}
\label{sec:membership}

In this section we prove the following.

\begin{theorem}\label{thm:membership}
The problem of computing a Nash equilibrium in two-team zero-sum polymatrix games with independent adversaries lies in $\CLS$.
\end{theorem}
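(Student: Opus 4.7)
The plan is to reduce the Nash equilibrium problem to an instance of approximate polynomial KKT, a class already known to lie in $\CLS$~\cite{Fearnley23,fearnley2022complexity}. The key observation, the same one exploited in the hardness reduction of \cref{sec:hardness}, is that because team $Y$ has no internal edges, the common utility
\[
U(x,y) = -\sum_{i<i' \in X} x_i A^{i,i'} x_{i'} - \sum_{i \in X,\; j \in Y} x_i A^{i,j} y_j
\]
is linear in each $y_j$ separately and contains no $y_j y_{j'}$ cross terms.

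First I would invoke the equivalence, sketched in the ``connection'' paragraph of the preliminaries, between approximate Nash equilibria and approximate uncoupled minmax-KKT points of $U$ over $\mathcal{X} \times \mathcal{Y}$, with polynomial loss in $\epsilon$. After a standard reparameterization of each simplex $\Delta(S_i)$ by box-constrained variables (with, if needed, a small polynomial penalty for the sum-to-one constraint), this already yields a minmax-KKT instance in the form of $\MinmaxIndKKT$. Next I would eliminate the $y$-variables by exploiting the $y$-linearity: since the maximum of a linear function over $\Delta(S_j)$ is attained at a pure strategy, the inner maximum factorizes as $F(x) = Q_X(x) + \sum_{j \in Y} \max_{s \in S_j} L_j^s(x)$, and introducing a slack variable $z_j$ per adversary with the linear constraints $z_j \geq L_j^s(x)$ turns this into a quadratic polynomial minimization over a polytope. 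Any approximate KKT point $(x^*, z^*)$ of this program yields an approximate minimizer of $F$, and combining $x^*$ with any pure best response $y^* \in \arg\max_y U(x^*, y)$ (computable coordinate-wise by linearity) gives an approximate Nash equilibrium.

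This is exactly where the polymatrix structure buys simplicity over the more general polytensor setting of \cite{Anagnostides23}: because the inner maximum factorizes over the independent adversaries and each factor is a max of finitely many linear functions, the smoothing step of the Moreau envelope is replaced by plain slack variables. The $\CLS$-membership then follows from the $\CLS$-completeness of box-constrained polynomial KKT~\cite{Fearnley23} together with a standard reduction from polytope-constrained polynomial KKT to its box-constrained version (bounded slack variables plus polynomial penalties for any remaining equalities).

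The main obstacle is the careful bookkeeping of the approximation parameter through this chain of reductions: an $\epsilon'$-approximate KKT point of the final polynomial program must yield an $\epsilon$-approximate Nash equilibrium with $\epsilon' = \mathrm{poly}(\epsilon)$ relative to the description length of the game, so that the inverse-exponential precision regime is preserved throughout. Fortunately, the $y$-linearity is precisely what prevents the inner maximization from amplifying errors by more than a polynomial factor, so this bookkeeping should go through without any delicate analytic machinery.
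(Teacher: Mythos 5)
Your reformulation step is essentially the paper's: introducing a slack variable $z_j$ per adversary with constraints $z_j \geq L_j^s(x)$ is exactly the epigraph/LP-dual rewriting that the paper uses (its variables $\gamma_j$ with constraints $\gamma_j \geq \sum_{i} e_k^\top A^{j,i} x_i$), yielding the same quadratic program over a polytope, whose KKT points are computable in $\CLS$ by \cite[Theorem~5.1]{fearnley2022complexity} together with the approximate-to-exact conversion of \cite{Fearnley23}. So the first two thirds of your argument match the paper.

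The gap is in how you recover the equilibrium. First, a KKT point $(x^*,z^*)$ of this program is \emph{not} an approximate minimizer of $F(x)=Q_X(x)+\sum_j \max_s L_j^s(x)$: the quadratic part $-\sum_{i<i'} x_i A^{i,i'}x_{i'}$ is nonconvex in general, so stationary points need not be (even local) minima, and the entire point of the $\CLS$ framework is that one only gets stationarity. Second, and more importantly, pairing $x^*$ with \emph{any} pure best response $y^*$ does not in general give a Nash equilibrium. The adversaries are indeed happy with any best response, but the team players are not: stationarity of $x^*$ is certified with respect to a specific convex combination of the tight constraints $z_j \geq L_j^s(x)$, namely the KKT multipliers $\mu_{jk}^*$, which sum to $1$ over $k$ (from the $\partial/\partial z_j$ condition) and are supported on best-response pure strategies (complementary slackness). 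If adversary $j$ has several tied pure best responses at $x^*$, player $i\in X$ best-responds to the mixture $\mu_j^*$ but possibly not to an arbitrarily chosen pure $y_j^*$, so $(x^*,y^*)$ can fail the equilibrium condition for team $X$. The paper's proof closes exactly this step: it reads off the adversaries' mixed strategies from the multipliers and outputs $(x^*,\mu^*)$, verifying via the three KKT conditions that every player is best-responding. Your argument becomes correct once you replace ``any pure best response'' by ``the mixed strategies given by the multipliers of the slack constraints'' and drop the claim about approximate minimization.
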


In particular, this implies that the $\CLS$-hardness result proved in the previous section is tight for such games.

\paragraph{\textbf{Reformulation as a minimization problem.}} The main idea to prove the theorem is to start from the minmax formulation of the problem and to rewrite it as a minimization problem by using duality. Let a two-team polymatrix zero-sum game with independent adversaries be given. Without loss of generality, we assume that every player has exactly $m$ strategies. Recall that by the structure of the game we have $A^{i,i'} = (A^{i',i})^\top$, $A^{i,j} = -(A^{j,i})^\top$, and $A^{j,j'} = 0$ for all $i,i' \in X$, $j,j' \in Y$ with $i \neq i'$ and $j \neq j'$. We can write

\begin{equation}\label{eq:minmax-to-min}
\begin{split}
&\min_{x \in \mathcal{X}}\max_{y \in \mathcal{Y}} - \sum_{i, i' \in X: i < i'} x_i A^{i,i'} x_{i'} - \sum_{i \in X, j \in Y} x_i A^{i,j} y_j\\
    = &\min_{x \in \mathcal{X}}\max_{y \in \mathcal{Y}} - \sum_{i, i' \in X: i < i'} x_i A^{i,i'} x_{i'} + \sum_{i \in X, j \in Y} y_j A^{j,i} x_i\\
    = &\min_{x \in \mathcal{X}}\left\{- \sum_{i, i' \in X: i < i'} x_i A^{i,i'} x_{i'} + \max_{y \in \mathcal{Y}}\sum_{i \in X, j \in Y} y_j A^{j,i} x_i\right\}
    \end{split}
\end{equation}
Now consider the ``max'' part of the above objective written as the following LP in $y$ variables:

\begin{align*}
\begin{split}
    \max\;\; &\sum_{j \in Y} c_j^\top y_j \\
     \;\;&\sum_{k=1}^m y_{jk} = 1 \;\; \forall j \in Y\\
    \;\;& y_{jk} \geq 0\;\;\forall j \in Y\;\;\text{and}\;\;\forall k \in [m]
    \end{split}
\end{align*}
where $c_j = \sum_{i \in X} A^{j,i} x_i$ for all $j \in Y$. Then the dual of the above program can be written as:
\begin{align*}
\begin{split}
    \min\;\; &\sum_{j \in Y}\gamma_j\\
     \;\;&\gamma_j \geq c_{jk} \;\; \forall j \in Y\;\;\text{and}\;\;\forall k \in [m]
    \end{split}
\end{align*}

Thus replacing the max part in \eqref{eq:minmax-to-min} by the equivalent dual formulation we obtain:

\begin{equation}\label{eqn:equivminformulation}
\begin{split}
\min \quad & - \sum_{i, i' \in X: i < i'} x_i A^{i,i'} x_{i'} + \sum_{j \in Y} \gamma_j \\
\text{s.t. } &\gamma_j \geq \sum_{i \in X} e_k^\top A^{j,i} x_i  \;\; \forall j \in Y\;\;\text{and}\;\;\forall k \in [m] \\
& \gamma_j \leq M \;\; \forall j \in Y \\
& \gamma \in \mathbb{R}^{|Y|} \\
& x \in \mathcal{X}
\end{split}
\end{equation}
where $e_k \in \mathbb{R}^m$ is the $k$th unit vector. We have introduced an additional set of constraints $\gamma_j \leq M$ to ensure that the feasible region is bounded. $M$ is chosen to be sufficiently large such that $M > \max_{x \in \mathcal{X}, j \in Y, k \in [m]} \sum_{i \in X} e_k^\top A^{j,i} x_i$. Note that in order to obtain the formulation \eqref{eqn:equivminformulation} we have crucially used the fact that the original game is polymatrix and has independent adversaries.

For what comes next, we will need the following definition of KKT points, which generalizes \cref{def:min-kkt} to arbitrary linear constraints.

\begin{definition}\label{def:min-KKT-general}
Consider an optimization problem of the form
\begin{equation*}
\begin{split}
\min \quad &f(x) \\
\text{s.t. } &Ax \leq b \\
&x \in \mathbb{R}^n
\end{split}
\end{equation*}
where $f: \mathbb{R}^n \to \mathbb{R}$ is continuously differentiable, $A \in \mathbb{R}^{m \times n}$, and $b \in \mathbb{R}^m$. A point $x^* \in \mathbb{R}^n$ is a KKT point of this problem if there exists $\mu \in \mathbb{R}^m$ such that
\begin{enumerate}
\item $\nabla f(x^*) + A^\top \mu = 0$
\item $A x^* \leq b$
\item $\mu \geq 0$
\item $\mu^\top (b - Ax) = 0$
\end{enumerate}
\end{definition}

We can now continue with our proof of \cref{thm:membership}.
The problem of finding an exact KKT point of \eqref{eqn:equivminformulation} lies in \CLS. Indeed, it is known that finding an approximate KKT point of such a program lies in \CLS~\cite[Theorem~5.1]{fearnley2022complexity}, and, given that the objective function is a quadratic polynomial, an approximate KKT point (with sufficiently small approximation error) can be turned into an exact one in polynomial time (see, e.g., \cite[Lemma~A.1]{Fearnley23}).

\cref{thm:membership} now simply follows from the following claim.

\begin{claim}\label{clm:KKTNash}
Given a KKT point of \eqref{eqn:equivminformulation}, we can compute a Nash equilibrium of the original game in polynomial time.
\end{claim}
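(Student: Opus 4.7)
The plan is to read off a Nash equilibrium of the original polymatrix game directly from the Lagrange multipliers of \eqref{eqn:equivminformulation}. Let $(x^\ast, \gamma^\ast)$ be the given KKT point and introduce the associated multipliers: $\mu_{jk} \geq 0$ for each constraint $\gamma_j \geq \sum_{i \in X} e_k^\top A^{j,i} x_i$, $\lambda_j \geq 0$ for $\gamma_j \leq M$, and standard simplex multipliers $\sigma_i \in \mathbb{R}$ and $\nu_{ik} \geq 0$ for $x_i \in \Delta(S_i)$. Per \cref{def:min-KKT-general} all of these can be recovered in polynomial time by solving a linear feasibility program over the KKT conditions. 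I claim that keeping $x = x^\ast$ and setting $y^\ast_j := (\mu_{jk})_{k \in [m]}$ for each $j \in Y$ yields a Nash equilibrium.

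First I verify that $y^\ast_j$ is a valid mixed strategy. Stationarity of the Lagrangian with respect to $\gamma_j$ gives $\sum_k \mu_{jk} = 1 + \lambda_j$. By the choice of $M$ in \eqref{eqn:equivminformulation}, we have $M > \max_k \sum_i e_k^\top A^{j,i} x^\ast_i$; at the same time, $\sum_k \mu_{jk} \geq 1$ forces some $\mu_{jk} > 0$, and complementary slackness on this constraint combined with feasibility imply $\gamma^\ast_j = \max_k \sum_i e_k^\top A^{j,i} x^\ast_i$. In particular $\gamma^\ast_j < M$, so $\lambda_j = 0$ and $\sum_k \mu_{jk} = 1$. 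For player $j$'s best-response property, complementary slackness $\mu_{jk}(\gamma^\ast_j - \sum_i e_k^\top A^{j,i} x^\ast_i) = 0$ shows the support of $y^\ast_j$ lies in $\arg\max_k \sum_i e_k^\top A^{j,i} x^\ast_i$. Since player $j$'s utility is linear in her own strategy with coefficient vector $(\sum_i e_k^\top A^{j,i} x^\ast_i)_{k \in [m]}$, $y^\ast_j$ is indeed a best response to $x^\ast$.

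For players $i \in X$, I would compute $\partial L / \partial x_{ik}$ and use $A^{i',i} = (A^{i,i'})^\top$ and $A^{j,i} = -(A^{i,j})^\top$ to rewrite stationarity as
\[
\sum_{i' \neq i} e_k^\top A^{i,i'} x^\ast_{i'} + \sum_{j \in Y} e_k^\top A^{i,j} y^\ast_j = \sigma_i - \nu_{ik}.
\]
The left-hand side is exactly the payoff that player $i$ obtains in the polymatrix game by deviating to the pure strategy $k$ against $(x^\ast_{-i}, y^\ast)$. Combining with $\nu_{ik} x^\ast_{ik} = 0$ gives the classical indifference characterization of a best response: every pure strategy in the support of $x^\ast_i$ attains the common payoff $\sigma_i$, while every pure strategy outside attains payoff at most $\sigma_i$. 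Hence $x^\ast_i$ is a best response for player $i$, completing the Nash equilibrium check.

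The main obstacle I anticipate is the sign and transpose bookkeeping in the stationarity computation for $x_{ik}$: the quadratic term $-\sum_{i'<i''} x_{i'} A^{i',i''} x_{i''}$ contributes $-\sum_{i' \neq i} e_k^\top A^{i,i'} x^\ast_{i'}$ via the identity $A^{i',i} = (A^{i,i'})^\top$, while the dual term $\sum_{j,k'} \mu_{jk'} \sum_i e_{k'}^\top A^{j,i} x_i$ contributes $-\sum_j e_k^\top A^{i,j} y^\ast_j$ via $A^{j,i} = -(A^{i,j})^\top$. Once these two pieces are combined and matched against $\nabla U_i$, the rest of the argument is a routine application of complementary slackness on the simplex.
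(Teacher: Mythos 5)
Your proposal is correct and follows essentially the same route as the paper: take the adversaries' mixed strategies to be the multipliers $\mu_{jk}$ of the constraints $\gamma_j \geq \sum_{i} e_k^\top A^{j,i} x_i$, use stationarity in $\gamma_j$ to show they sum to one, complementary slackness to show they are supported on best responses, and stationarity in $x_{ik}$ plus complementary slackness to show each team-$X$ player best-responds (the paper phrases this last step as ``$x_i^*$ is a KKT point of the best-response LP, hence a global optimum,'' which is equivalent to your indifference argument). Your explicit treatment of the multiplier for $\gamma_j \leq M$ is a slightly more careful version of the paper's remark that this constraint is never tight.
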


\begin{proof}
Let $(x^*,\gamma^*)$ be a KKT point of \eqref{eqn:equivminformulation}. We define notation for the following multipliers:
\begin{itemize}
\item For all $j \in Y$ and $k \in [m]$, $\mu_{jk}^* \in \mathbb{R}_{\geq 0}$ corresponding to the constraint $\gamma_j \geq \sum_{i \in X} e_k^\top A^{j,i} x_i$.
\item For all $i \in X$, $\lambda_i^* \in \mathbb{R}$ corresponding to the constraint $\sum_{k \in [m]} x_{ik} = 1$.
\item For all $i \in X$ and $k \in [m]$, $\nu_{ik}^* \in \mathbb{R}_{\geq 0}$ corresponding to the constraint $x_{ik} \geq 0$.
\end{itemize}
Note that we have not included multipliers for the constraints $\gamma_j \leq M$. This is because none of these constraints will ever be tight at a KKT point $(x^*,\gamma^*)$ by construction of $M$.

Now, since $(x^*,\gamma^*)$ is a KKT point, we can compute such multipliers that satisfy the following KKT conditions (which are derived from \cref{def:min-KKT-general}) in polynomial time\footnote{Indeed, it is easy to check that given $(x^*,\gamma^*)$ such multipliers can then be found by solving an LP.}:
\begin{enumerate}
\item For all $i \in X$
\[- \sum_{i' \neq i} A^{i,i'}x_{i'}^* - \sum_{j \in Y} A^{i,j} \mu_j^* + \lambda_i^* \cdot 1_m - \nu_i^* = 0\]
where $1_m$ denotes a vector of $m$ ones, and where we used the fact that $A^{i,j} = -(A^{j,i})^\top$. Additionally, we have that for all $k \in [m]$, $\nu_{ik}^* > 0 \implies x_{ik}^* = 0$.
\item For all $j \in Y$
\[1 - \sum_{k \in [m]} \mu_{jk}^* = 0.\]
\item For all $j \in Y$ and $k \in [m]$, $\mu_{jk}^* > 0 \implies \gamma_j^* = \sum_{i \in X} e_k A^{j,i} x_i^*$.
\end{enumerate}

Now, we claim that $(x^*,\mu^*)$ forms a Nash equilibrium of the original game. First of all, note that by property 2 we have $\sum_{k \in [m]} \mu_{jk}^* = 1$ and thus $\mu_j^*$ is indeed a valid mixed strategy for player $j \in Y$. Next, property 3 can be rewritten as
\[\mu_{jk}^* > 0 \implies U_j(x^*,\mu_{-j}^*,e_k) = \sum_{i \in X} e_k A^{j,i} x_i^* = \gamma_j^* \geq \max_{k' \in [m]} \sum_{i \in X} e_{k'} A^{j,i} x_i^* = \max_{k' \in [m]} U_j(x^*,\mu_{-j}^*,e_{k'})\]
which means that the strategy of player $j$, namely $\mu_j^*$, is a best-response. Finally, property 1 can be reinterpreted as saying that $x_i^*$ is a KKT point of the following optimization problem (in variables $x_i$)
\begin{equation*}
\begin{split}
\min \quad & - \sum_{i' \neq i} x_i A^{i,i'} x_{i'}^* - \sum_{j \in Y} x_i A^{i,j} \mu_j^* \quad [= U_i(x_i, x_{-i}^*,\mu^*)] \\
\text{s.t. } & x_i \in \mathcal{X}_i
\end{split}
\end{equation*}
Since this is an LP, any KKT point is also a global solution. Thus, $x_i^*$ is a global minimum of this LP, which means that player $i$ is also playing a best-response.
\end{proof}

\subsection{Algorithm for Computing an Approximate Nash Equilibrium}

To provide an algorithm and some theoretical guarantees, we have to characterize an $\varepsilon$-KKT point for the minimization problem from the previous section, with the following definition:

\begin{definition}\label{def:min-KKT-general-eps}
Consider an optimization problem of the form
\begin{equation*}
\begin{split}
\min \quad &f(x) \\
\text{s.t. } &Ax \leq b \\
&x \in \mathbb{R}^n
\end{split}
\end{equation*}
where $f: \mathbb{R}^n \to \mathbb{R}$ is continuously differentiable, $A \in \mathbb{R}^{m \times n}$, and $b \in \mathbb{R}^m$. A point $\tilde{x} \in \mathbb{R}^n$ is a $\varepsilon$-KKT point of this problem if there exists $\mu \in \mathbb{R}^m$ such that
\begin{enumerate}
\item $\lVert \nabla f(\tilde{x}) + \mu^\top A\rVert_2 \leq \varepsilon $
\item $A \tilde{x} \leq b$
\item $\mu \geq 0$
\item $\mu^\top (A\tilde{x} - b) = 0$
\end{enumerate}
\end{definition}

As is standard in constrained optimization, it will be useful to define the \emph{Normal cone} to a nonempty closed convex feasible set $\mathcal {X} \subseteq \mathbb{R}^n$ at a point $x \in \cal X$.
\begin{definition}\label{def:normal-cone}
    $N_{\cal X}(x):=\{d \in \mathbb{R}^n:d^\top (y-x) \leq 0,\;\text{for all}\; y \in \cal{X}\}$.
\end{definition}

It is well known that if $x^*$ is an exact stationary point, then a necessary condition is that $-\nabla f(x^*) \in N_{\cal X}(x^*)$. Similarly, if $\tilde{x}$ is a $\varepsilon$-KKT point, then such a point can be characterized in terms of saying $\text{dist}(-\nabla f(\tilde{x},N_{\cal X}(\tilde{x}))\leq \varepsilon$, or equivalently, $-\nabla f(x^*) \in N_{\cal X}(x^*)+\mathbb{B}(0,\varepsilon)$, where $\mathbb{B}(0,\varepsilon)$ is the Euclidean ball centered at 0 and of radius $\varepsilon$, and the ``+'' operator here refers to the Minkowski sum of the two sets.

Furthermore, in our case, the normal cone can be characterized quite easily. In the exact KKT point case, it boils down to being able to represent the negative gradient vector as a conic combination of vectors in the normal cone. By extension, the approximate version, states that negative gradient at that point needs to be at most $\varepsilon$ away in Euclidean distance to a point in the normal cone. %These are precisely what is written in property 1 for the exact case and how it will be modified for the approximate version below.

\begin{theorem}
Assume all payoffs lie in $[-1,1]$.
There exists an algorithm that given $\varepsilon > 0$ and a two-team zero-sum polymatrix game with independent adversaries $\Gamma$ outputs an $\varepsilon$-Nash equilibrium in time
\[
O\!\left(\frac{\mathrm{poly}(|\Gamma|)}{\varepsilon^2}\right)
\]
where $|\Gamma|$ is the size of the representation of the game, i.e., the representation of the payoff matrices.
\end{theorem}

\begin{proof}
We first rely on the reformulation of our minmax problem into an equivalent minimization problem from \cref{thm:membership}. Particularly we are going to work with the formulation described in \eqref{eqn:equivminformulation}.
Since our constraints are bounded polytopes, we can always ensure \emph{exact} feasibility and will enforce \emph{exact} complementary slackness.
If $(\tilde{x},\tilde{\gamma})$ is a $\varepsilon$-KKT point, we describe the guarantees to compute the multipliers exactly that satisfy the following KKT conditions (which are derived from \cref{def:min-KKT-general-eps}). We describe below how the properties appear when characterizing an $\varepsilon$-KKT point (cf.~\cref{clm:KKTNash}).

\begin{enumerate}
\item For all $i \in X$
\[\Big\lVert - \sum_{i' \neq i} A^{i,i'}\tilde{x}_{i'} + \sum_{j \in Y} \tilde{\mu}_jA^{i,j}+ \tilde{\lambda_i}1_m - \tilde{\nu_i}\Big\rVert_2 \leq \varepsilon\]
where $1_m$ denotes a vector of $m$ ones. Additionally, we have that for all $k \in [m]$, $\tilde{\nu}_{ik} > 0 \implies \tilde{x}_{ik} = 0$.
\item For all $j \in Y$
\[\Big \lvert 1 - \sum_{k \in [m]} \tilde{\mu}_{jk} \Big\rvert \leq \varepsilon.\]
\item For all $j \in Y$ and $k \in [m]$, $\tilde{\mu}_{jk} > 0 \implies \tilde{\gamma}_j = \sum_{i \in X} e_k A^{j,i} \tilde{x}_i$.
\end{enumerate}

We may define $\hat{\mu}_{jk}=\dfrac{\tilde{\mu}_{jk}}{\sum_{\ell \in [m]}\tilde{\mu}_{j\ell}}$. We will now show that $(\tilde{x},\hat{\mu})$ form an $O(\mathrm{poly}(|\Gamma|)\varepsilon)$-NE for the original problem.

Firstly, consider $\hat{\mu}_{jk} > 0$, this implies $\tilde{\mu}_{jk} > 0$, which further implies that \[U_j(\tilde{x},\hat{\mu}_{-j},e_k) = \sum_{i \in X} e_k A^{j,i} \tilde{x}_i = \tilde{\gamma}_j \geq \max_{k' \in [m]} \sum_{i \in X} e_{k'} A^{j,i} \tilde{x}_i = \max_{k' \in [m]} U_j(\tilde{x},\hat{\mu}_{-j},e_{k'}),\]
which means that the strategy of player $j$, namely $\hat{\mu}_j$, is a best-response.

Consider, the term \[\Big\lVert - \sum_{i' \neq i} A^{i,i'}\tilde{x}_{i'} + \sum_{j \in Y} \hat{\mu}_jA^{i,j}+ \tilde{\lambda_i}1_m - \tilde{\nu_i}\Big\rVert_2.\] 
By substituting the definition of $\hat{\mu}$, we can show that the above expression is upper bounded by: 
\[\Big\lVert - \sum_{i' \neq i} A^{i,i'}\tilde{x}_{i'} + \sum_{j \in Y} \tilde{\mu}_jA^{i,j}+ \tilde{\lambda_i}1_m - \tilde{\nu_i}\Big\rVert_2+ \Bigg(\dfrac{\lvert 1-\sum_{\ell \in [m]}\tilde{\mu}_{j\ell}\rvert}{\lvert\sum_{\ell \in [m]}\tilde{\mu}_{j\ell}\rvert} \Bigg)\Big\lVert\sum_{j \in Y} \tilde{\mu}_jA^{i,j}\Big\rVert_2 \] 

Now, we use condition 2 from above and the fact that payoffs are bounded between [-1,1] to get a further upper bound as follows:
\[\Big\lVert - \sum_{i' \neq i} A^{i,i'}\tilde{x}_{i'} + \sum_{j \in Y} \tilde{\mu}_jA^{i,j}+ \tilde{\lambda_i}1_m - \tilde{\nu_i}\Big\rVert_2+ \Bigg(\dfrac{\varepsilon}{1-\varepsilon}\Bigg)|Y|(1+\varepsilon)m^{3/2}.\]

%\[\Big\lVert - \sum_{i' \neq i} A^{i,i'}\tilde{x}_{i'} + \sum_{j \in Y} \tilde{\mu}_jA^{i,j}+ \tilde{\lambda_i}1_m - \tilde{\nu_i}\Big\rVert_2+ \varepsilon\Bigg(\dfrac{1+\varepsilon}{1-\varepsilon}\Bigg)\mathrm{poly}(|\Gamma|).\]
Furthermore, from condition 1 and assuming without loss of generality that $\varepsilon < 1/2$, we get that the above term is overall bounded by:
\begin{align}\label{eqn:eps-KKT-cond1}
    \Big\lVert -\sum_{i' \neq i} A^{i,i'}\tilde{x}_{i'} + \sum_{j \in Y} \hat{\mu}_jA^{i,j}+ \tilde{\lambda_i}1_m - \tilde{\nu_i}\Big \rVert_2 \leq O(\mathrm{poly}(|\Gamma|)\varepsilon)
\end{align}

Now consider, as in the exact case, 

\begin{equation}\label{eqn:x-best-response}
\begin{split}
\min \quad & - \sum_{i' \neq i} x_i A^{i,i'} \tilde{x}_{i'} + \sum_{j \in Y} \hat{\mu}_jA^{i,j}x_i \quad [= U_i(x_i, \tilde{x}_{-i},\hat{\mu})] \\
\text{s.t. } & x_i \in \mathcal{X}_i
\end{split}
\end{equation}

From \eqref{eqn:eps-KKT-cond1}, $\tilde{x}_i$ can be seen as an $O(\mathrm{poly}(|\Gamma|)\varepsilon)$-KKT point of the above problem. However, since the objective function is an LP with a feasible set that is bounded in diameter, it can be shown that a $\delta$-KKT point becomes a $\sqrt{2}\delta$-approximate global minimum for the above problem. 
\begin{claim}
    A $\delta$-KKT point for the optimization problem \eqref{eqn:x-best-response} is a $\sqrt{2}\delta$-approximate global minimum for the problem.
\end{claim}

\begin{proof}
This can be seen by the following argument:
First define $h(x)=U_i(x,\tilde{x}_{-i},\hat{\mu})$. Due to the convexity of $h$, we have:
$h(y) \geq h(x)+\nabla h(x)^\top (y-x)$. Let $x_i^*$ be a global minimizer of $h$, then by the above inequality, we have: 
%\alex{should $\tilde{x}$ be $\tilde{x}_i$ everywhere here? And $\mathcal{X}$ should be $\mathcal{X}_i$?}
$h(x_i^*) \geq h(\tilde{x}_i)+\nabla h(\tilde{x}_i)^\top (x_i^*-\tilde{x}_i)$. Consider a vector $d \in N_{\mathcal{X}_i}(\tilde{x}_i)$, which implies $d^\top (x_i^*-\tilde{x}_i) \leq 0$. Applying this to the above inequality, we have:
$h(x_i^*) \geq h(\tilde{x}_i)+\nabla h(\tilde{x}_i)^\top (x_i^*-\tilde{x}_i) \geq (\nabla h(\tilde{x}_i)+d)^\top (x_i^*-\tilde{x}_i)$. 
Now a lower bound on $(\nabla h(\tilde{x}_i)+d)^\top (x_i^*-\tilde{x}_i)$ is simply $-\lVert (\nabla h(\tilde{x}_i)+d)\rVert_2 \lVert(x_i^*-\tilde{x}_i) \rVert_2$.
%By using \eqref{eqn:eps-KKT-cond1}, it is easy to see that this is further lower bounded by $-O(\mathrm{poly}(|\Gamma|)\varepsilon)\lVert(x^*-\tilde{x}) \rVert$. Since both $x^*$ and $\tilde{x}$ are points in the same probability simplex, their diameter is upper bounded by $\sqrt{2}$. Putting these together, we finally get that:
Since $\tilde{x}_i$ is a $\delta$-KKT point, there exists $d \in N_{\mathcal{X}_i}(\tilde{x}_i)$ with $\lVert (\nabla h(\tilde{x}_i)+d)\rVert_2 \leq \delta$.
Finally, both $x_i^*$ and $\tilde{x}_i$ are points in the same probability simplex, whose diameter is upper bounded by $\sqrt{2}$. Thus, we obtain $h(\tilde{x}_i) \leq h(x_i^*) + \sqrt{2}\delta$.
\end{proof}
Applying this claim for each player $i$, we get that 
 $(\tilde{x},\hat{\mu})$ is a $O(\mathrm{poly}(|\Gamma|)\varepsilon)$-NE for the original problem.
\paragraph{Computing Multipliers:}
Let $J_j$ be the index set of the inequality constraints at $\tilde{x}$ for player $j$
which corresponds to the set of inequalities $\tilde{\gamma}_j \geq \sum_{i \in X}e_k^\top A^{j,i}\tilde{x}_i$, for all $k \in [m]$. Let $\tilde{J}_j$ be the set of active constraints where the above inequalities attain equality. Then let us set $\mu_{jk}=0$ for $k \in J_j \backslash \tilde{J}_j$. Now let $I_i$ be the index set for the inequality constraints concerning $\tilde{x}_i$, and let $\tilde{I}_i$ denote the set of active constraints for $\tilde{x}_i$, then we set $\nu_{ik}=0$ for each $k \in I_i \backslash \tilde{I}_i$ . For each player $i$ in the team, consider the following optimization problem:

\begin{equation*}\label{eqn:compute_mul}
\begin{split}
\min \quad &\Big\lVert - \sum_{i' \neq i} A^{i,i'}\tilde{x}_{i'} + \sum_{j \in Y}\sum_{k \in \tilde{J}_j} \mu_{jk}e_k^\top A^{i,j}+ \lambda_i1_m - \nu'_i\Big\rVert^2\\
\text{s.t. } &\Big\lvert \sum_{k\in \tilde{J}_j}\mu_{jk} - 1\Big\rvert \leq \varepsilon\;\;\text{for all}\;\; j \in Y\\
& \mu_{jk} \geq 0\;\; \text{for all}\;\; j \in Y\;\;\text{and}\;\;k\in \tilde{J}_j\\
& \nu'_{ik} \geq 0\;\; \text{for all}\;\;k\in \tilde{I}_i\\
& \nu'_{ik} = 0\;\; \text{for all}\;\;k\in I_i \backslash \tilde{I}_i\\
& \lambda_i \in \mathbb{R}
\end{split}
\end{equation*}
Any optimal solution to the above program (obtained for each player $i \in X$) gives us the required $\tilde{\mu},\tilde{\lambda},\tilde{\nu}$, that satisfy complementary slackness and the $\varepsilon$-KKT conditions. Note that the above program is a convex quadratic program with a bounded feasible set that is a polytope. Hence, the multipliers can be found \emph{exactly} in time that is polynomial in $|\Gamma|$, see, e.g., \cite{KozlovTK80-convex-quadratic}.

\paragraph{Algorithm:}
 We will run projected gradient descent (PGD) on our original minimization problem that is \eqref{eqn:equivminformulation} and let us call the feasible set in this program to be $\Theta$. In our case projections can be obtained efficiently. Suppose, we get some $(x,\gamma)$, which is infeasible. First for every infeasible $x_i$, we project onto the respective simplex and then for each $j \in Y$ we choose $\gamma_j = \max_{k \in m} \sum_{i \in X}e_k^\top A^{j,i}x_i$. It is easy to see from this procedure that per step projection takes time that is polynomial in $|\Gamma|$. Now, the PGD\footnote{Here we simply use $\nabla f(x^t,\gamma
^t):=[\nabla_xf(x^t,\gamma^t),\nabla_\gamma f(x^t,\gamma^t)]$.} update can be written as:
\begin{align}\label{eqn:PGD}
    (x^{t+1},\gamma^{t+1})=P_{\Theta}\left((x^t,\gamma^t)-\frac{1}{L}\nabla f(x^t,\gamma^t)\right)
\end{align}

One can define the gradient mapping:
\begin{align}\label{eqn:GMP}
    G_L(x,\gamma)=L\left((x,\gamma)-P_{\Theta}\left((x,\gamma)-\frac{1}{L}\nabla f(x,\gamma)\right)\right)
\end{align}
It is well known that under these conditions PGD has the following guarantee:

\begin{lemma}[Section 7.3.1 in \cite{wright2022optimization}]
Let $\mathcal X\subset \mathbb R^n$ be a nonempty closed convex set and let
$f:\mathbb{R}^n \rightarrow \mathbb{R}$ be $L$-smooth and bounded from below on $\mathcal X$,
i.e., $f^* := \inf_{x \in\mathcal X} f(x) > -\infty$.
Fix $\varepsilon > 0$ and $x^0 \in \mathcal X$, and generate $\{x^t\}_{t\ge 0}$ by projected
gradient descent with stepsize $1/L$.
If
\[
T \;\ge\; \frac{2L\bigl(f(x^0)-f^*\bigr)}{\varepsilon^2},
\]
then:
\begin{enumerate}
    \item $\displaystyle \min_{0\leq t\leq T-1}\lVert G_L(x^t)\rVert \leq \varepsilon/2$.
    \item There exists $t \in \{1,2,\ldots,T\}$ such that
    \[
    -\nabla f(x^t) \in N_{\mathcal X}(x^t) + \mathbb{B}(0,\varepsilon).
    \]
\end{enumerate}
Here $N_{\mathcal X}(x)$ denotes the normal cone to $\mathcal X$ at $x$, and
$\mathbb{B}(0,\varepsilon)$ is the Euclidean ball of radius $\varepsilon$ centered at $0$.
\end{lemma}

Thus, we run PGD until we encounter an iterate $t$ such that $\|G_L(x^t)\|\le \varepsilon/2$
(the lemma guarantees that such a $t$ exists within
$T = O\!\left(\frac{L(f(x^0)-f^*)}{\varepsilon^2}\right)$ iterations), and then set
$(\tilde{x},\tilde{\gamma}) = (x^{t+1},\gamma^{t+1})$.
By the projection optimality condition and $L$-smoothness, this implies that
$-\nabla f(\tilde{x},\tilde{\gamma}) \in N_{\mathcal X}(\tilde{x},\tilde{\gamma}) + \mathbb{B}(0,\varepsilon)$.
To get an $O(\varepsilon)$-NE for our problem, we run PGD to obtain an $\frac{\varepsilon}{\mathrm{poly}(|\Gamma|)}$-KKT point. We then solve the convex program in \eqref{eqn:compute_mul} at $(\tilde{x},\tilde{\gamma})$ . Since our payoffs are bounded in $[-1,1]$, our objective function is  Lipschitz continuous and Lipschitz smooth with constants that are polynomially bounded in $|\Gamma|$. Furthermore, the per step running time (gradient computation and the projection operation) is polynomially bounded in $|\Gamma|$. Thus, the final running time is
$O\!\left(\frac{\mathrm{poly}(|\Gamma|)}{\varepsilon^2}\right)$.
\end{proof}

\section{Open Problems}

We believe the following are some interesting open questions for future work:
\begin{itemize}
    \item Our \CLS-membership result only applies to the setting where the adversaries are independent. If interactions between adversaries are allowed then the problem is only known to lie in \PPAD, so there is a gap with the \CLS-hardness that we show. Is the problem \CLS-complete, \PPAD-complete, or neither?
    \item Our hardness result provides strong evidence that no algorithm with running time $O(\poly(\allowbreak\log(\allowbreak 1/\eps)))$ exists. However, we show that PGD attains a running time (with respect to $\varepsilon$) of $O(1/\eps^2)$, for the polymatrix setting. What is the optimal polynomial dependence in $1/\eps$ in this setting?
\end{itemize}

\subsection*{Acknowledgements}

AH and GM were supported by the Swiss State Secretariat for Education, Research and Innovation (SERI) under contract number MB22.00026. SGN would like to thank Ioannis Panageas and colleagues at the Zuse Institute Berlin (ZIB) for discussions pertaining to this project.

\small 
\DeclareUrlCommand{\Doi}{\urlstyle{sf}} 
\renewcommand{\path}[1]{\small\Doi{#1}} 
\renewcommand{\url}[1]{\href{#1}{\small\Doi{#1}}} 
\bibliographystyle{alphaurl} 
\bibliography{bibliography.bib}
\end{document}